\documentclass[a4paper,fleqn]{cas-sc}

\usepackage[authoryear,longnamesfirst]{natbib}
\usepackage[ruled,vlined,linesnumbered]{algorithm2e}
\usepackage{amsmath}
\usepackage{amssymb}
\usepackage{amsthm}
\usepackage{graphicx}
\usepackage{wrapfig}
\usepackage{tikz}
\usepackage{tikz-3dplot}
\usetikzlibrary{3d}
\usepackage{amsfonts}
\usepackage{xcolor}
\usepackage{bookmark}
\usepackage{booktabs} 
\usepackage{multirow} 
\usepackage{subfig}
\usepackage{caption}
\usepackage{boxedminipage}
\usepackage{algpseudocode}
\usepackage{colortbl}
\usepackage{url}
\usepackage{bm}
\usepackage{comment}

\newcommand*{\semantics}[1]{\textnormal{[\kern-.15em[}#1\textnormal{]\kern-.15em]}}
\newtheorem{definition}{Definition}
\newtheorem{proposition}{Proposition}
\newtheorem{theorem}{Theorem}

\newcommand{\IE}{\textsf{IE}}

\newcommand{\Pset}{\mathsf{P}}

\newcommand{\Qge}{\mathbb{Q}^{\geq0}}
\newcommand{\Qpos}{\mathbb{Q}^{+}}
\newcommand{\mdiamond}{\Diamond}
\newcommand{\boldBox}{\Box}

\begin{document}
\let\WriteBookmarks\relax
\def\floatpagepagefraction{1}
\def\textpagefraction{.001}

\shorttitle{}    

\shortauthors{Huang et al.}  

\title [mode = title]{A Logic-based Temporal Cohort Discovery Engine:
Algorithms, Indices, and Experimental Results on the 
National Sleep Research Resource}

\author[1]{Yan Huang}
\ead{Yan.Huang@uth.tmc.edu}

\author[1]{Xiaojin Li}
\ead{Xiaojin.Li@uth.tmc.edu}

\author[1]{Licong Cui}
\ead{Licong.Cui@uth.tmc.edu}

\author[1]{Guo-Qiang Zhang}
\cormark[1]
\ead{Guo-Qiang.Zhang@uth.tmc.edu}

\cortext[cor1]{Corresponding author}

\affiliation[1]{organization={University of Texas Health Science Center at Houston},
            city={Houston},
            state={TX},
            country={USA}}

\begin{abstract}
\textbf{\bf Objective:} Large sleep-study repositories contain rich time-stamped physiological annotations, but cohort discovery is still commonly implemented as ad hoc scripts or scalar-index filters. This limits reproducibility when inclusion criteria depend on precise temporal relationships among sleep stages, respiratory events, arousals, desaturations, and signal-derived features. We present a logic-based temporal cohort discovery engine that brings formal semantics, model checking, specialized indexing, and empirical evaluation into a unified biomedical informatics framework. \textbf{Materials and Methods:} We adopt Rational Ensemble Logic (QEL) as a dense-time formal foundation for sleep-data querying and represent each annotated polysomnogram as a Biomedical Event Structure Temporal Model (BEST), a finite mapping from event labels to non-overlapping rational interval ensembles. Cohort discovery is formulated as model checking of QEL formulas over BEST databases. We organize common sleep-research requirements into three reusable temporal query patterns: single-event retrieval, dual-event temporal pattern matching, and event data extraction. To make model checking practical at repository scale, we introduce two indexing strategies: 2-Dimensional Fractional Cascading (2DFC) for interval-overlap retrieval and Fully Connected Fractional Cascading (FCFC) for dual-event pattern matching. \textbf{Results:} The prototype cohort discovery engine was implemented in Python with in-memory and MongoDB-backed execution modes and evaluated on synthetic interval datasets containing up to 90 million intervals and on real-world National Sleep Research Resource annotations from the Cleveland Children's Sleep and Health Study (CCSHS) containing 515 subjects, 202,587 intervals, 23 event labels. 2DFC constructs indexes in linear space and linear build time, reducing build time at 90 million intervals from 11,549 s with RTFC and 23,902 s with 2DRT to 3,655 s. FCFC replaces repeated flatten-sort-scan operations with precomputed boundary-distance arrays and reduced RAM query time by 80-98\% in dual-event experiments; MongoDB experiments showed approximately 50\% reduction for high-candidate workloads. On CCSHS, cohort-selection queries executed at sub-second latency at native scale and under 45 s at $1{,}000\times$ scale. \textbf{Conclusion:} QEL, BEST, and model checking provide mathematically explicit semantics for temporal sleep phenotypes that are both human-readable and machine-computable, while 2DFC and FCFC demonstrate that formal temporal cohort discovery can be executed efficiently on large biomedical data repositories. The resulting framework supports reusable, explainable cohort definitions aligned with sleep-medicine scoring rules and secondary use of polysomnography data. This work is a part of the ``Symbolic Biomedicine'' program championed by the corresponding author.
\end{abstract}


\begin{keywords}
Rational Ensemble Logic \sep Temporal logic \sep Model checking \sep Cohort discovery \sep Secondary use of data \sep Data management \sep Sleep medicine \sep National Sleep Research Resource \sep Fractional cascading
\end{keywords}

\maketitle

\section{Introduction}
\label{intro}

Sleep medicine is a data-intensive domain in which clinically meaningful phenotypes are often temporal rather than purely scalar. A polysomnogram (PSG) records multi-channel physiological time series over an overnight study and is accompanied by annotations for sleep stages, respiratory events, oxygen desaturations, arousals, limb movements, artifacts, and other scored events. Large repositories such as the National Sleep Research Resource (NSRR~\cite{zhang2018nsrr}) make these recordings available for secondary use, but the dominant cohort-discovery workflow remains a mixture of summary indices, spreadsheet filters, and custom scripts. These approaches are effective for simple variables such as apnea-hypopnea index (AHI) or oxygen desaturation index (ODI), yet they are poorly suited for cohort definitions whose meaning depends on metric temporal relationships: an apnea occurring during N3 sleep, a desaturation beginning within a specified delay after a respiratory event, an arousal overlapping the termination of a hypopnea, or the absence of confounding events during a washout window.

This gap is a biomedical informatics problem. Cohort criteria represent computational phenotype definitions; therefore, temporal criteria should be explicit, reusable, and expressed in a format that is both easily readable by humans and computable by machines. In sleep research, however, temporal logic is rarely treated as a first-class representation. The American Academy of Sleep Medicine (AASM) Scoring Manual~\cite{aasmmanual2015,aasm2012rules} rules and study protocols routinely specify duration thresholds, temporal windows, co-occurrence constraints, precedence relationships, and exclusion intervals, but these semantics are often buried inside analysis code. As a result, two investigators can intend the same clinical definition but obtain different cohorts because of different event anchoring conventions, boundary handling, or delay-window assumptions. This problem is amplified in multi-study repositories where data are reused long after the original scoring context.

We address this problem by developing a logic-based temporal cohort discovery engine for interval-annotated PSG data. The central idea is to formulate cohort discovery as a model-checking task. Each annotated sleep study is represented as a Biomedical Event Structure Temporal Model (BEST), in which every event label is mapped to the finite set of rational intervals during which the event holds. Cohort criteria are written as formulas in Rational Ensemble Logic (QEL~\cite{qel}), 
a dense-time temporal logic in the 
Ensemble Logic spectrum proposed in~\cite{zhang2024temporal} combining first-order quantification, displacement, bounded existence, bounded universality, and Boolean composition in a single framework. A subject belongs to a cohort exactly when the corresponding BEST satisfies the QEL formula at the specified observation point. This gives cohort discovery a precise mathematical semantics and turns temporal phenotype definitions into reusable symbolic artifacts.

The paper brings together five contributions. First, we adopt QEL as a formal foundation for cohort discovery over sleep data, including PSG annotations and signal-derived physiological event intervals. This provides a single logical language for event labels, duration thresholds, bounded temporal windows, and absence constraints. Second, we formulate temporal cohort discovery under the model-checking paradigm by representing annotated PSG recordings as BESTs and by defining query answers as satisfaction sets and temporal-pattern witnesses. Third, we introduce two index strategies that exploit the normalized structure of sleep annotation data: 2-Dimensional Fractional Cascading (2DFC), which accelerates interval-overlap retrieval with linear-space construction, and Fully Connected Fractional Cascading (FCFC), which accelerates dual-event temporal pattern matching through constant-time target-event lookup after preprocessing. Fourth, we implement these ideas in a cohort discovery engine and evaluate the engine on both synthetic datasets and real-world NSRR annotations, including synthetic workloads with up to 90 million intervals and Cleveland Children's Sleep and Health Study (CCSHS) annotations from 515 subjects with 1,000 cohort-query runs and scaled repository experiments. Fifth, we organize sleep-research queries into three temporal query patterns-single-event retrieval, dual-event temporal pattern matching, and event data extraction-that can express practical cohort specifications and rule-like definitions aligned with AASM scoring concepts.

The resulting contribution is not merely a faster interval query algorithm or a new cohort-selection interface. It is an integrated approach that combines mathematical semantics, temporal data representation, model-checking execution, index design, and empirical evaluation for a concrete biomedical repository setting. To our knowledge, no prior biomedical informatics system has combined these components for temporal sleep cohort discovery in a single framework.

\newpage
\noindent\textbf{Statement of Significance.}

\noindent
\begin{tabular}{|p{0.22\textwidth}|p{0.72\textwidth}|}
\hline
\textbf{Problem or Issue} &
Temporal relationships are central to sleep medicine, but they are rarely represented as explicit, reusable, and executable cohort definitions; existing analyses rely on scalar indices, event-pairing heuristics, or study-specific scripts that obscure clinical semantics and limit cross-cohort reproducibility. \\
\hline
\textbf{What is Already Known} &
Sleep repositories such as NSRR store richly annotated interval-based PSG recordings. Temporal database and event-stream systems support useful query operations, yet none provides formally grounded semantics for sleep phenotype definitions that are simultaneously human-readable and machine-computable, nor indexing structures that exploit the sorted, non-overlapping structure of per-label annotation ensembles. \\
\hline
\textbf{What this Paper Adds} &
We introduce a QEL/BEST model-checking framework for reproducible temporal cohort discovery over large sleep archives. QEL supplies a dense-time language for temporal phenotypes; BEST supplies a repository-ready representation of annotated PSGs; and 2DFC/FCFC supply scalable indexing that makes clinically meaningful criteria, including duration thresholds, event co-occurrence, bounded delay, stage restrictions, and signal-level event definitions, available as computational phenotypes that are both human-readable and directly machine-executable. \\
\hline
\textbf{Who Would Benefit} &
Sleep researchers, clinical investigators, data scientists, biomedical informaticists, and platform teams are building cohort-discovery and data-sharing services for secondary use of PSG and other real-world physiological data. \\
\hline
\end{tabular}

\section{Related Work}
\label{sec:queries}

\subsection{Temporal Reasoning on Sleep Data}
Sleep study data combine continuous physiological signals with time-stamped clinical annotations at distinct time scales. Table~\ref{tab:ccshs_annotations} lists annotation event types from the CCSHS~\cite{rosen2003prevalence}.
Sleep stages (N1/N2/N3/REM/WAKE) are scored in fixed 30 s epochs; respiratory events, arousals, SpO\textsubscript{2} desaturations, and limb movements are interval events with onsets and offsets~\cite{aasmmanual2015}. Temporal reasoning over these annotations captures co-occurrence, precedence, and recurrence with more precision than scalar indices (AHI, ODI, arousal index~\cite{tsai1999ahi,chung2012odi,asda1992arousals}).

Many clinically meaningful relationships are defined by timing: hypopneas require associated desaturation or arousal within a specified window~\cite{aasmmanual2015}, and arousals cluster near the termination of obstructive events~\cite{asda1992arousals}. Without explicit temporal operators, such relationships rely on ad-hoc event-pairing heuristics that vary across labs and limit reproducibility across cohorts.

\begin{table}[h]
\centering
\small
\begin{tabular}{llllr}
\hline
Label & Type & Concept & Signal Location & \# of Event \\
\hline
ARASDA\_C3   & Arousals        & ASDA arousal                         & C3         & 42,455 \\
SPO2\_ART    & Respiratory     & SpO2 artifact                        & SpO2       & 36,866 \\
SPO2\_DESAT  & Respiratory     & SpO2 desaturation                    & SpO2       & 31,382 \\
N2           & Stages          & Stage 2 sleep                        & N/A        & 18,000 \\
HYPOP   & Respiratory     & Hypopnea                             & Airflow    & 13,359 \\
WAKE         & Stages          & Wake                                 & N/A        & 13,167 \\
N3           & Stages          & Stage 3 sleep                        & N/A        & 8,530  \\
N1           & Stages          & Stage 1 sleep                        & N/A        & 8,266  \\
LM\_RL1      & Limb Movement   & Limb movement - right                & Right Leg1 & 5,575  \\
LM\_LL1      & Limb Movement   & Limb movement - left                 & Left Leg1  & 5,026  \\
REM          & Stages          & REM sleep                            & N/A        & 4,235  \\
LM\_RL       & Limb Movement   & Limb movement - right                & R Leg      & 3,585  \\
LM\_LL       & Limb Movement   & Limb movement - left                 & L Leg      & 3,453  \\
PLM\_RL      & Limb Movement   & Periodic leg movement - right        & R Leg      & 1,456  \\
PLM\_LL      & Limb Movement   & Periodic leg movement - left         & L Leg      & 1,365  \\
PLM\_RL1     & Limb Movement   & Periodic leg movement - right        & Right Leg1 & 1,071  \\
CA      & Respiratory     & Central apnea                        & Airflow    & 1,024  \\
ARASDA\_C4   & Arousals        & ASDA arousal                         & C4         & 862   \\
PLM\_LL1     & Limb Movement   & Periodic leg movement - left         & Left Leg1  & 776   \\
OA      & Respiratory     & Obstructive apnea                    & Airflow    & 657   \\
\hline
\end{tabular}
\caption{Sleep event annotations with short names in CCSHS.}
\label{tab:ccshs_annotations}
\end{table}

\subsection{Common Absolute and Relative Period Windows Used in Sleep Studies}
Sleep studies operationalize temporal structure via period windows. Absolute windows align to clock-time bands (e.g., 00:00-03:00)~\cite{malicki2022circadian,chen2022dynamic} or fixed post-lights-off/pre-awakening intervals, computing window-normalized event rates (AHI, ODI, arousal index). Relative windows align to the sleep episode's internal structure: SPT partitions for early-vs-late contrasts, or stage-constrained windows (REM vs.\ NREM) for stage-specific burdens and phenotyping~\cite{mokhlesi2012rem,yamauchi2015nrem}. All require explicit choices about anchoring, inclusion rules, and artifact handling for cross-cohort comparability~\cite{aasmmanual2015}.

\subsection{Temporal Queries for Sleep Study Data}
Recurring temporal query patterns in sleep research include precedence, overlap, bounded delay, and stage-specific constraints. Hypoxic burden~\cite{azarbarzin2019hypoxic} moves toward an explicit temporal view but remains study-specific. Key complicating factors, pulse-oximetry delay, epoch-boundary effects, and inter-scorer variability, require queries to encode explicit lags, overlap criteria, and recovery definitions.

\subsection{Existing Temporal Query Systems}
Systems like STAR~\cite{ozcep2014starql}, T-SQL~\cite{microsoft2016tsql}, and KarmaLego~\cite{moskovitch2015karma} incorporate temporal operators or pattern mining over symbolic time intervals; sleep research typically relies on custom scripts. The SQL:2016 \texttt{MATCH\_RECOGNIZE} clause~\cite{iso2016sql} and CEP engines (Esper~\cite{esper2024}, Apache Flink~\cite{carbone2015flink}) apply NFA-based pattern matching over event streams. Automata-based methods require flattening concurrent interval streams into a single sorted sequence, lack metric temporal operators, and cannot exploit the non-overlapping sorted structure of BEST. Our 2DFC and FCFC algorithms address these limitations directly.

\section{Methods}
\label{sec:formal}
The formal development adopts Rational Ensemble Logic over the rational timeline and specializes it to interval-annotated PSG records. The resulting workflow has four layers: (i) raw XML annotations are normalized into labeled intervals; (ii) each subject is represented as a BEST; (iii) QEL formulas define query and cohort criteria; and (iv) model-checking results are accelerated by 2DFC and FCFC indexes. Figure~\ref{fig:bestdb_overview} illustrates the overall pipeline.

\begin{figure}
  \centering
  \includegraphics[width=1\textwidth]{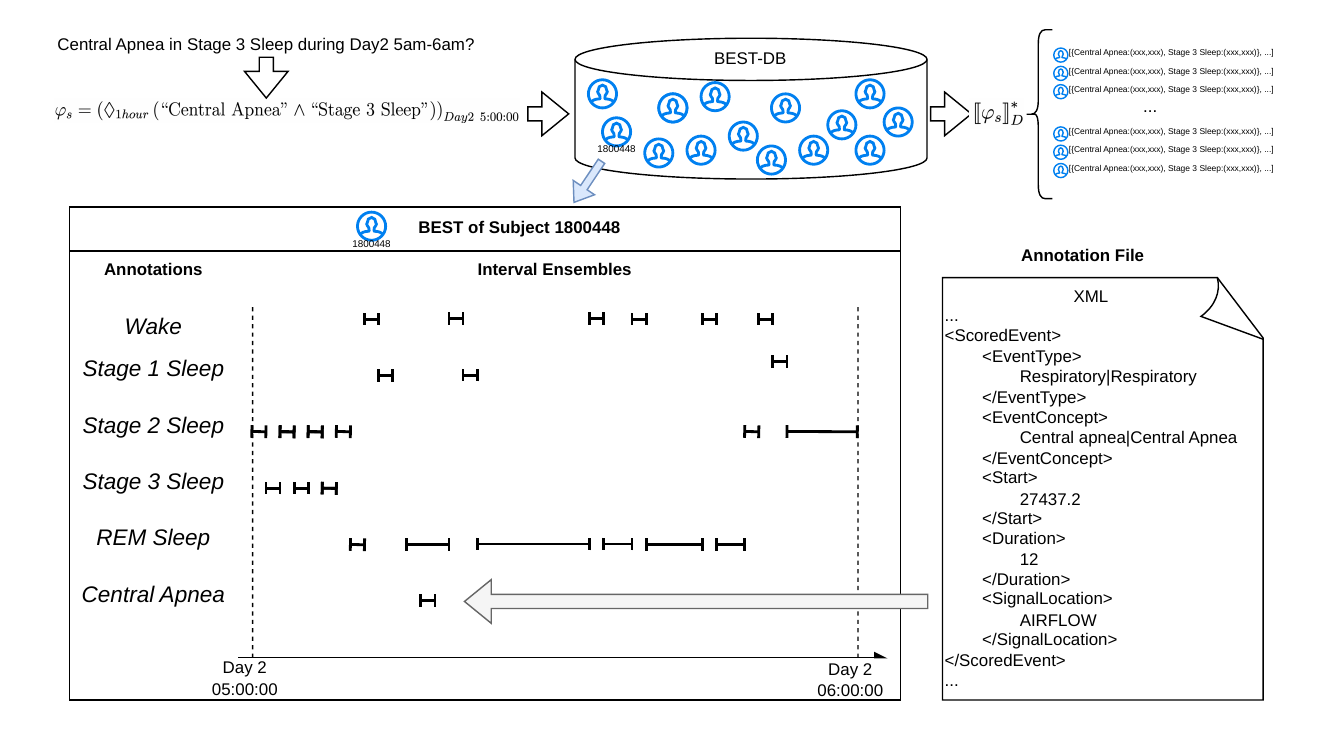}
  \caption{Overview of the BEST-DB framework. A ``Central Apnea'' annotation from NSRR XML is transformed into a labeled interval, inserted into the subject-specific BEST, and evaluated by QEL model checking. Query execution returns matching subjects and, when requested, the witnessing intervals that certify the temporal pattern.}
  \label{fig:bestdb_overview}
\end{figure}

\subsection{BEST for QEL over rational time}
\label{subsec:best-qel}
\textbf{Intervals and interval ensembles.} Following the QEL~\cite{qel} framework, the time domain is the rational line. For sleep data we use the non-negative rational timeline $\Qge$ because PSG annotation onsets and durations are stored as finite decimal or sampled quantities on a timeline starting at zero. An interval is denoted by $[\!(s,t)\!]$ and defined as
\[
[\!(s,t)\!] = \{z\in\Qge \mid s\preceq z \ll t\},
\]
where $s,t\in\Qge$ with $s\le t$, and the boundary relations $\preceq$ and $\ll$ are instantiated by $<$ or $\le$ to represent open, closed, left-closed right-open, and left-open right-closed intervals. A single time point is represented by the degenerate closed interval $[s,s]$.

\begin{definition}[Interval ensemble]
\label{def:interval_ensemble}
An \emph{interval ensemble} $\delta$ is a finite set of non-empty, pairwise non-overlapping intervals over $\Qge$. We write $\IE$ for the set of all interval ensembles over $\Qge$. In a normalized PSG representation, intervals in each ensemble are ordered by start time, and points not contained in the union of the intervals are treated under a closed-world assumption as times at which the corresponding label is absent.
\end{definition}

\textbf{Labeled interval ensembles.} Let $\Pset$ be the finite set of PSG event labels used as atomic propositions, such as $\Pset=\{\texttt{N3}, \texttt{REM}, \texttt{HYPOP}, \texttt{SPO2\_DESAT}, \texttt{ARASDA\_C3}, \texttt{CA}, and \texttt{OA}\}$. A labeled interval ensemble $(\ell,\delta)$ pairs $\ell\in\Pset$ with all intervals during which the label holds for a given subject or recording. For example, $(\texttt{N3},\delta_{n3})$ collects the Stage-3 sleep intervals in one timeline.

\begin{definition}[BEST and BEST-DB~\cite{qel}]
\label{def:BEST}
A \emph{Biomedical Event Structure Temporal Model} (BEST) is a function
\[
  \varepsilon:\Pset\to\IE.
\]
Thus $\varepsilon(\ell)$ is the interval ensemble in which label $\ell$ holds. A pair $(\ell,\varepsilon(\ell))$ is a labeled interval ensemble representing a temporal event of $\ell$. By convention, labels mapped to the empty interval ensemble are omitted from the displayed graph of $\varepsilon$, so a BEST may be written as a finite set of non-empty labeled interval ensembles. A BEST-DB $D$ is a finite collection of subject-level or session-level BESTs:
\[
D=\{\varepsilon^1,\varepsilon^2,\ldots,\varepsilon^m\}.
\]
\end{definition}

\subsection{QEL syntax and BEST semantics}
\label{subsec:qel-semantics}
We now restate the QEL syntax used by the application. Point terms over $\Qge$ are generated by constants, variables, and finite addition:
\[
 u,v ::= a \mid x \mid u+v, \qquad a\in\Qge.
\]
Window or norm terms are positive rational-valued terms generated by positive constants, variables, absolute values of point terms, and finite addition:
\[
 s,t ::= c \mid z \mid \lVert u\rVert \mid s+t, \qquad c\in\Qpos.
\]
For the sleep-query templates below, most window terms are constants or quantified duration variables; constrained quantifier notation such as $\exists x\in[0,r)$ is used as a readable abbreviation for restricting the assignment of $x$ to that rational interval.

\begin{definition}[QEL formula syntax~\cite{qel}]
\label{def:qel_syntax}
QEL formulas are generated by
\[
\varphi,\psi ::= p \mid \varphi_u \mid \neg\varphi \mid \varphi\wedge\psi \mid \varphi\vee\psi \mid \boldBox_t\varphi \mid \mdiamond_t\varphi \mid \exists x\,\varphi \mid \forall x\,\varphi,
\]
where $p\in\Pset$ is an atomic event label, $\varphi_u$ is exact displacement by point term $u$, $\mdiamond_t\varphi$ is bounded existence within the future window of length $t$, $\boldBox_t\varphi$ is bounded universality throughout that window, and quantified variables range over rational time or positive rational window lengths as appropriate.
\end{definition}

\begin{definition}[BEST satisfaction for QEL~\cite{qel}]
\label{def:qel_best_semantics}
Let $\varepsilon$ be a BEST and $s\in\Qge$ be a time point. The satisfaction relation $(\varepsilon,s)\models\varphi$ is defined inductively by the Boolean clauses of first-order logic and by the following QEL clauses:
\[
\begin{array}{ll}
(\varepsilon,s)\models p &\text{iff } s\in \bigcup\varepsilon(p),\\[2mm]
(\varepsilon,s)\models \varphi_u &\text{iff } (\varepsilon,s+u)\models\varphi,\\[2mm]
(\varepsilon,s)\models\mdiamond_t\varphi &\text{iff there exists } r\in\Qge \text{ with } 0\le r<t \text{ and } (\varepsilon,s+r)\models\varphi,\\[2mm]
(\varepsilon,s)\models\boldBox_t\varphi &\text{iff for all } r\in\Qge \text{ with } 0\le r<t,\ (\varepsilon,s+r)\models\varphi,\\[2mm]
(\varepsilon,s)\models\exists x\,\varphi &\text{iff there exists } a\in\Qge \text{ such that } (\varepsilon,s)\models\varphi[x\mapsto a],\\[2mm]
(\varepsilon,s)\models\forall x\,\varphi &\text{iff for all } a\in\Qge,\ (\varepsilon,s)\models\varphi[x\mapsto a].
\end{array}
\]
For positive duration variables $z$, the existential and universal clauses range over $\Qpos$. This is the future-directed QEL semantics specialized to PSG records.
\end{definition}

\subsection{Model checking and temporal query answers}
\label{subsec:model-checking-query}
A temporal query is a QEL formula evaluated against every BEST in a database. The answer set is
\[
\semantics{\varphi_q}_D = \{\varepsilon\in D\mid (\varepsilon,q)\models\varphi\}.
\]
Equivalently, using the QEL displacement convention, $\varepsilon\models\varphi_q$ abbreviates satisfaction of $\varphi$ at observation point $q$.
A standard QEL query formula $\varphi$ with an observation point $q$ follows the structure:
\[
\varphi_q \;:=\; \mathcal{P} \,(\psi_{t_1} \land \psi_{t_2} \land \dots \land \psi_{t_k})_q
\]
where:
\begin{enumerate}
    \item \textbf{Prefix Part $\mathcal{P}$:} This optional part consists of quantifiers ($\forall, \exists$) that declare time-distance variables used in the formula. If no variables are needed, this part is empty. The prefix scopes over the entire subsequent formula.
    
    \item \textbf{Core Part $\psi_{t_i}$:} The core consists of one or more sub-formulas connected by logical AND ($\land$). Each sub-formula $\psi_{t_i}$ is constructed from a set of atomic event annotations (labels) using one of the following operators:
      \[
      A_t=\{x\mid x+t\in A\},\qquad
      \mdiamond_r A=\{x\mid (x+[0,r))\cap A\ne\emptyset\},\qquad
      \boldBox_r A=\{x\mid x+[0,r)\subseteq A\}.
      \]
    The subscript $t_i$ denotes the temporal shift of the $i$-th sub-formula, meaning the condition is evaluated at time $q+t_i$. This allows correlating events at different relative times.
    Negation($\neg$) can be added to the atomic event annotations within $\psi_{t_i}$, and they are connected using logical AND ($\land$) and OR ($\lor$) as needed.
\end{enumerate}

\paragraph{Example.}
The query $\forall x{\in}\{1,2,3\}\,\exists y{\in}[0,5]\,\bigl((\varphi\land\varphi')_x \land (\Box_y(\psi\lor\neg\psi'))_3 \land (\Diamond_5(\neg\chi\land\chi'))_7\bigr)_{10}$
can be interpreted as: ``For all $x\in\{1,2,3\}$, there exists $y\in[0,5]$ such that at time $10{+}x$ both $\varphi$ and $\varphi'$ occur; at time $13$ event $\psi$ holds and $\psi'$ is absent for $y$ units; and at time $17$ within a 5-unit window $\chi$ is absent at some point while $\chi'$ occurs.''

\begin{definition}[Temporal pattern witness]
\label{def:temporal_pattern}
Let $\varepsilon\models\varphi_q$. A temporal pattern witness $\pi$ is a finite, set-minimal collection of positive labeled intervals and, when negation is used, negative gap intervals from complements of labeled ensembles, sufficient to certify the truth of $\varphi_q$ in $\varepsilon$. The query result may return only the matching BESTs, one witness per BEST, or all witnesses, denoted respectively by $\semantics{\varphi_q}_D$, $\semantics{\varphi_q}^{1}_D$, and $\semantics{\varphi_q}^{*}_D$.
\end{definition}

\subsection{Indexing for Labeled Interval Ensembles}
\label{sec:indexing}
The QEL semantics is defined over dense rational time, but each BEST contains only finitely many annotation intervals. This finiteness allows model checking to be reduced to finite evidence sets. The general QEL result uses quantifier elimination for dense ordered divisible abelian groups: after fixing a BEST, a reference point, and a formula, the satisfying assignments of rational variables decompose into finitely many linear cells, and it is sufficient to test one rational representative from each cell.

\begin{theorem}[Finite rational evidence set for QEL~\cite{qel}]
\label{prop:finite-boundary}
Fix a BEST $\varepsilon$, a reference point $q\in\Qge$, and a QEL formula $\varphi[\mathbf{x}]$ with free rational variables $\mathbf{x}=(x_1,\ldots,x_n)$. There exists a finite set $F_{\varepsilon,q,\varphi}\subseteq(\Qge)^n$, computable from the interval endpoints in $\varepsilon$, the rational constants in $\varphi$, the reference point $q$, and the affine boundaries induced by displacements and modal windows, such that
\[
(\varepsilon,q)\models\exists\mathbf{x}\,\varphi[\mathbf{x}]
\iff
\exists\mathbf{a}\in F_{\varepsilon,q,\varphi}\; (\varepsilon,q)\models\varphi[\mathbf{a}],
\]
\[
(\varepsilon,q)\models\forall\mathbf{x}\,\varphi[\mathbf{x}]
\iff
\forall\mathbf{a}\in F_{\varepsilon,q,\varphi}\; (\varepsilon,q)\models\varphi[\mathbf{a}].
\]
For the one-dimensional window queries used by the application, this finite set specializes to representatives of endpoint-induced segments and their bounding faces.
\end{theorem}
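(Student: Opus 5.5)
The plan is to prove the statement by reducing satisfaction of $\varphi[\mathbf{x}]$ in the fixed BEST $\varepsilon$ at $q$ to a quantifier-free formula of the linear theory of $(\mathbb{Q},<,+,0,1)$ with rational constants. Quantifier elimination for dense ordered divisible abelian groups then gives a finite cell decomposition of $(\Qge)^n$, and I would pick one rational point per cell.

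\textbf{Step 1 (translation).} I translate $\varphi$ by structural induction into a first-order formula $\tau_\varphi(\mathbf{x},s)$ over the ordered group, where $s$ is the evaluation point. For an atom $p$, with $\varepsilon(p)=\{[\!(a_j,b_j)\!]\}_{j\le k}$, I set $\tau_p(s):=\bigvee_j (a_j\preceq_j s\ll_j b_j)$, a finite disjunction of strict or non-strict linear inequalities with rational constants taken from endpoints. Since $\varepsilon$ is finite, this is legitimate. Displacement $\varphi_u$ becomes $\tau_\varphi(\mathbf{x},s+u)$, where $u$ is an affine term in $\mathbf{x}$. The modality $\mdiamond_t\varphi$ becomes $\exists r\,(0\le r<t\wedge\tau_\varphi(s+r))$, and $\boldBox_t$ is handled dually. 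Inner quantifiers stay first-order, relativized to $r\ge0$ or $r>0$. The only nonlinear ingredient is the norm term $\lVert u\rVert$, and I eliminate it by a case split on the sign of $u$, giving $(u\ge0\wedge\ldots[u])\vee(u<0\wedge\ldots[-u])$. Every formula obtained this way is a linear formula with rational parameters. I will check that $(\varepsilon,q)\models\varphi[\mathbf{a}]$ iff $\mathbb{Q}\models\tau_\varphi(\mathbf{a},q)$ directly from Definition~\ref{def:qel_best_semantics}.

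\textbf{Step 2 (quantifier elimination and cells).} By QE for the theory of ordered divisible abelian groups with constants from $\mathbb{Q}$, $\tau_\varphi(\mathbf{x},q)$ is equivalent to a Boolean combination of atoms $\lambda_i(\mathbf{x})\,\bowtie\,0$, with $\bowtie\in\{<,=\}$ and $\lambda_i$ affine with rational coefficients. The coefficients come from the endpoints, the constants of $\varphi$, $q$, and the window and displacement terms; these are exactly the affine boundaries named in the statement. The finitely many hyperplanes $\lambda_i=0$, together with $x_j=0$, partition $(\Qge)^n$ into finitely many convex cells, namely the sign-condition classes. Within each cell the truth value of $\varphi[\mathbf{a}]$ is constant. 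I let $F_{\varepsilon,q,\varphi}$ contain one rational point from each nonempty cell. Such a point exists and is computable, since a nonempty relatively open polyhedron defined over $\mathbb{Q}$ has a rational point, for example a barycenter of rational vertices with a small rational perturbation found by Fourier–Motzkin. Both equivalences then follow at once: a cell containing a witness contributes a representative in $F$, and a cell where $\varphi$ fails contributes a counterexample in $F$.

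\textbf{Step 3 (one-dimensional specialization).} When $n=1$, the cells are the points $c_i$ and the open segments between consecutive boundary values $c_i$. I take $F$ to be the boundary points (the bounding faces) together with the midpoints of the segments and one point beyond the last boundary, which is the claimed form.

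\textbf{Main obstacle.} I expect the hardest part to be making Step 1 faithful. The difficulties are the mixed open and closed endpoints, the half-open modal window $[0,t)$ when $t$ is itself a variable term, and the absolute values. Each of these must produce only linear atoms; otherwise QE does not apply. If a uniform treatment becomes messy, I would first normalize $\varphi$ to the prefix/core form of Section~\ref{subsec:model-checking-query} and treat each core operator separately. I would also note that the claimed computability needs effective QE (Ferrante–Rackoff), with possibly nonelementary size growth. That growth affects only efficiency, not finiteness.
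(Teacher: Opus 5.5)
Your proposal is correct and follows the same route as the paper. The paper defers to \cite{qel} and sketches exactly this argument: translate the BEST clauses into the first-order theory of dense ordered divisible abelian groups, eliminate quantifiers to obtain a Boolean combination of linear inequalities with rational constants, and take one rational representative per cell and face of the resulting finite decomposition. Your explicit translation (mixed endpoints, variable windows, relativized quantifiers) and your effectiveness remark via Ferrante--Rackoff fill in details the paper leaves implicit; the one small care needed is that, for variables ranging over $\Qpos$, you keep only representatives from cells inside that domain (discarding the $x_j=0$ faces), so the universal equivalence is not falsified by an out-of-domain point.
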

\begin{proof}[See~\cite{qel}] \end{proof}

For a fixed BEST, every atomic proposition is a finite union of rational intervals. QEL uses only addition, order, rational constants, displacement, and future-window existential or universal quantification. These clauses can be translated into the first-order theory of dense ordered divisible abelian groups, which admits quantifier elimination. The resulting quantifier-free formula is a Boolean combination of linear inequalities with rational constants, producing a finite cell decomposition. Truth is constant on each cell and face, so one rational representative per cell is complete for existential and universal model checking.

In the implementation, endpoint representatives are encoded as integer segment identifiers whenever the query contains only the bounded temporal templates in Section~\ref{sec:template}. Thus the finite-evidence theorem justifies the engineering reduction from dense rational time to endpoint arrays; the reduction is a semantic consequence of QEL, not merely an optimization.

\subsection{Temporal Query Template}
\label{sec:template}
We present standard QEL query templates for cohort extraction and pattern discovery. Let $\ell$ denote an event label, $\ell^*$ a wildcard label ranging over the event vocabulary, $q\in\Qge$ an observation point, $r\in\Qpos$ a duration, and $m$ the maximum duration of the recording. Cohort selection queries return matching BESTs; pattern extraction queries additionally return witnesses.

\paragraph{Single-event queries.}
These queries support direct data discovery and reusable feature extraction.
\begin{itemize}
    \item \textbf{Anywhere on the timeline:} $(\Diamond_m \ell)_0$.
    \item \textbf{Within $[q,q+r)$:} $(\Diamond_r \ell)_q$.
    \item \textbf{Starting within $[q,q+r)$:} $\exists x\in[0,r)\,\bigl((\Box_x\neg\ell)\land \ell_x\bigr)_q$.
\end{itemize}

\paragraph{Dual-event queries.}
These queries support phenotype definitions that depend on co-occurrence, precedence, and bounded delay between labels $\ell_A$ and $\ell_B$.
\begin{itemize}
  \item \textbf{Co-occurrence:} event $\ell_A$ overlaps event $\ell_B$ somewhere on the timeline:
  \[
    \bigl(\Diamond_m(\ell_A\land\ell_B)\bigr)_0.
  \]
  \item \textbf{Before variant 1} (entire A before B): 
  \[
    \exists q \exists x \exists y  
    \Bigl(
      \Box_x \neg\ell_A
      \;\land\;
      \bigl(\Box_y (\ell_A \land \neg\ell_B)\bigr)_{x}
      \;\land\;
      (\neg\ell_A)_{x+y}
      \;\land\;
      \bigl(\Diamond_m \ell_B\bigr)_{x+y}
    \Bigr)_q
  \]

    \item \textbf{Before variant 2} (A starts before B): 
    \[
      \exists q \exists x \exists y 
      \Bigl(
        \Box_x \neg\ell_A
        \;\land\;
        \bigl(\Box_y (\ell_A \land \neg\ell_B)\bigr)_{x}
        \;\land\;
      (\Diamond_m \ell_B)_{x+y}
    \Bigr)_q
    \]

    \item \textbf{Before variant 3} (A ends before B): 
    \[
      \exists q \exists x 
      \Bigl(
        \Box_x (\ell_A \land \neg\ell_B)
        \;\land\;
        (\neg\ell_A)_{x}
        \;\land\;
      (\Diamond_m \ell_B)_{x}
    \Bigr)_q
    \]

    \item \textbf{Before variant 4} (moment of A before B): 
    \[
      \exists q \exists x 
      \Bigl(
      \Box_x \left(\ell_A \land \neg\ell_B \right) \land \left(\Diamond_m \ell_B\right)_{x} 
      \Bigr)_q
    \]
\end{itemize}
The variants deliberately separate stricter interval precedence from weaker pointwise precedence, making the clinical meaning of a query explicit.

\paragraph{Event data extraction.}
These templates support data-management tasks in which matching event instances, not only matching subjects, are returned.
\begin{itemize}
  \item \textbf{Clock-anchored existence:} any label $\ell^*$ in $[q,q+r)$: $(\Diamond_r\ell^*)_q$.
  \item \textbf{Event-anchored existence:} $\ell$ co-occurs with any label $\ell^*$: $\bigl(\Diamond_m(\ell\land\ell^*)\bigr)_0$.
  \item \textbf{Event-anchored universality:} $\ell$ co-occurs with $\ell^*$ throughout a required window: $\bigl(\Box_m(\ell\land\ell^*)\bigr)_0$.
\end{itemize}

\subsection{Algorithms for Efficient Query Execution}
\label{sec:methods}

We deploy two structures under the QEL model-checking layer: 2DFC for single-event and extraction queries, where the task is fast interval retrieval within a window, and FCFC for dual-event pattern queries, where the task is fast temporal-distance lookup across heterogeneous event timelines.

\subsubsection{From Temporal Queries to Range Queries}

Because each interval $[x,y]$ can be treated as a point $(x,y)$ in 2D, a QEL existence query over a window $[x',y']$ reduces to a 2D range query: find all intervals whose start is not after the query end and whose end is not before the query start, i.e., all $(x,y)$ with $x\leq y'$ and $y\geq x'$. The baseline structure 2DRT~\cite{bentley_searching_problems,lueker_range_query} answers this in $O(\log^2 n+k)$ with $O(n\log n)$ space; RTFC~\cite{rtfc} improves query time to $O(\log n+k)$ via fractional cascading~\cite{fc} at the same space cost. Because BEST intervals are pre-sorted and non-overlapping, 2DFC eliminates the preprocessing overhead of these general structures.

\subsubsection{2-Dimensional Fractional Cascading for Intervals (2DFC)}
\label{2dfc}
Given $m$ sorted non-overlapping interval lists with $n$ total intervals, 2DFC splits each list into a start-point list $L_\ell^x$ and an end-point list $L_\ell^y$, then builds two 1DFC structures ($FC_x$, $FC_y$). Algorithm~\ref{alg:build2DFC} details the construction.

\begin{algorithm}
  \caption{Build 2DFC}
  \label{alg:build2DFC}
  \SetKwFunction{Build}{Build2DFC}
  \KwIn{The original 2D array of intervals $L$}
  \KwOut{2 1DFC data structures $FC_x$ and $FC_y$}
  \SetKwProg{Fn}{Function}{:}{}
  \Fn{\Build{$L$}}{
    Initialize $L_x$ and $L_y$\;
    \For{$i \leftarrow 0$ \KwTo $|L|-1$}{
      \ForEach{$interval$ \textbf{in} $L[i]$}{
        Add $interval.start$ to $L_x[i]$\;
        Add $interval.end$ to $L_y[i]$\;
      }
    }
    Create 1DFC $FC_x$ for $L_x$ using the structure of $(data,index\_in\_L,pointer\_next)$\;
    Create 1DFC $FC_y$ for $L_y$ using the structure of $(data,index\_in\_L,pointer\_next)$\;
    \KwRet{$FC_x$, $FC_y$}\;
  } 
\end{algorithm}

\begin{algorithm}
  \caption{Existence query by 2DFC}
  \label{alg:query2DFC}
  \SetKwFunction{QODFC}{Query1DFC}
  \SetKwFunction{QTDFC}{Query2DFC}
  \KwIn{$L$, $FC_x$, $FC_y$, $q=[x',y']$}
  \KwOut{A list of intervals}
  \SetKwProg{Fn}{Function}{:}{}
  \Fn{\QTDFC{$L$, $FC_x$, $FC_y$, $x'$, $y'$}}{
    Initialize $Result$\;
    $start\_indices \leftarrow \QODFC(FC_y, x')$\;
    $end\_indices \leftarrow \QODFC(FC_x, y')$\;
    \For{$i \leftarrow 0$ \KwTo $|L|-1$}{
      \For{$j \leftarrow start\_indices[i]$ \KwTo $end\_indices[i]$}{
        \If{$L[i][j].start \leq y'$}{
          Add $L[i][j]$ to $Result$\;
        }
      }
    }
    \KwRet{$Result$}\;
  } 

  \Fn{\QODFC{$L$, $FC$, $x$}}{
    Initialize $Result$\;
    $pointer \leftarrow \text{BinarySearch}(FC[0], x)$\;
    \For{$i \leftarrow 0$ \KwTo $|FC|-1$}{
      \If{$pointer < FC[i].length $}{
        $index\_in\_L \leftarrow FC[i][pointer][1]$\;
        $pointer \leftarrow FC[i][pointer][2]$\;
      }
      \Else{
        $index\_in\_L \leftarrow L[i].length$\;
        $pointer \leftarrow FC[i+1].length$\;
      }
      \If{$pointer > 0$ and $FC[i+1][pointer - 1][0] \leq x$}{
        $pointer \leftarrow pointer - 1$\;
      }
      Add $index\_in\_L$ to $Result$\;
    }
    \KwRet{$Result$}\;
  }
\end{algorithm}

\begin{proposition}\label{prop:2dfc_correctness}
Given a valid 2DFC structure over a set of non-overlapping interval ensembles, Algorithm~\ref{alg:query2DFC} returns exactly the set of intervals $[x_i,y_i]$ that overlap the query interval $[x',y']$, equivalently those satisfying $x_i\leq y'$ and $y_i\geq x'$. Thus 2DFC correctly solves the interval-overlap range query required by QEL bounded-existence extraction.
\end{proposition}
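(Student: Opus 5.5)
The argument splits into two parts: correctness of the per-list positions returned by \texttt{Query1DFC}, and correctness of the scanning loop in \texttt{Query2DFC}. The key structural fact comes from Definition~\ref{def:interval_ensemble}. Each list $L[i]$ is sorted by start time and its intervals are pairwise non-overlapping, so the start sequence $x_{i,0}<x_{i,1}<\cdots$ and the end sequence $y_{i,0}<y_{i,1}<\cdots$ are both sorted, and $y_{i,j}\le x_{i,j+1}$. As a result, for a fixed list the set $\{j : y_{i,j}\ge x'\}$ is a suffix of indices, the set $\{j : x_{i,j}\le y'\}$ is a prefix, and the overlapping intervals form the contiguous block of indices lying in both. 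The goal is therefore to show that the algorithm enumerates exactly this block.

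\textbf{Step 1: the 1DFC positions.} I would state the standard fractional-cascading invariant as a lemma. After processing level $i$, the pointer into the augmented list $FC[i+1]$ is the position of the first element greater than $x$, or greater than or equal to $x$, depending on the comparison convention; this is also where the ``$pointer-1$'' correction enters. The field $index\_in\_L$ at level $i$ equals the rank of $x$ in the original list $L[i]$. The proof is by induction on $i$. The base case is the binary search on $FC[0]$. For the inductive step, the sampled successor pointers $pointer\_next$ land within one position of the true rank in $FC[i+1]$, and the single comparison $FC[i+1][pointer-1][0]\le x$ fixes the off-by-one. The else branch covers the case where $x$ exceeds every element of $FC[i]$, and there the rank is $|L[i]|$. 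Applying the lemma to $FC_y$ with key $x'$ gives $start\_indices[i]$ as the first $j$ with $y_{i,j}\ge x'$. Applying it to $FC_x$ with key $y'$ gives $end\_indices[i]$ at (or one past) the last $j$ with $x_{i,j}\le y'$.

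\textbf{Step 2: the scan.} Within each list, the loop visits the indices from the first interval with $y_{i,j}\ge x'$ up to the boundary index from $FC_x$. By monotonicity of the end sequence, every visited $j$ has $y_{i,j}\ge x'$. The explicit test $L[i][j].start\le y'$ then discards any boundary element whose start exceeds $y'$, which absorbs the ambiguity of whether $end\_indices$ is inclusive. Soundness follows because every reported interval satisfies both inequalities. For completeness, any overlapping interval has index at least $start\_indices[i]$ by the suffix property and at most the $FC_x$ boundary by the prefix property, so it is visited and passes the test. Overlap is equivalent to $x_i\le y'\wedge y_i\ge x'$ for closed intervals, which gives the stated equivalence. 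The connection to QEL bounded existence is then immediate from the semantics of $\mdiamond_r\ell$ in Definition~\ref{def:qel_best_semantics} together with the reduction to endpoint representatives justified by Theorem~\ref{prop:finite-boundary}.

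\textbf{Main obstacle.} I expect Step 1 to be the hardest part, specifically making the cascading invariant precise for the pseudocode as written. The pseudocode mixes strict and non-strict comparisons and handles the end-of-list case by jumping to $FC[i+1].length$. I would first fix the exact construction of the augmented lists, namely which elements are promoted and what $pointer\_next$ stores, and then prove that the true position lies in $\{pointer-1, pointer\}$. If the boundary conventions turn out to be inconsistent, I would weaken the lemma to ``$end\_indices[i]$ is within one of the true prefix boundary.'' I would then rely on the explicit start test in the scan, together with the monotone end test implied by $start\_indices$, to show that the output is still exact.
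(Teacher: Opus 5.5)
The paper states this proposition without proof; it is only illustrated by Figure~\ref{2dfc_example}. So there is no argument to compare against. Your prefix/suffix decomposition and Step~2 are the natural argument, but Step~1 has a genuine gap, and it is exactly where the argument has no slack.

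\textbf{The cascading test is reversed.} Use the pointers as constructed in Figure~\ref{2dfc_example}: the pointer of $v\in FC[i]$ is the first position in $FC[i+1]$ with value $\ge v$, and the index field counts native elements of $L[i]$ below $v$. Let $v$ be the first element of $FC[i]$ with $v\ge x$, and let $p$ be its pointer. The correct position in $FC[i+1]$ is then $p$ or $p-1$, and one must step back iff $FC[i+1][p-1][0]\ge x$ (or $>x$ under the strict convention). The pseudocode steps back iff $FC[i+1][p-1][0]\le x$, which is the wrong case under either convention. So your claim that this comparison ``fixes the off-by-one'' is false.

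\textbf{A counterexample from the paper's own figure.} Take the query $[40,60]$. \texttt{Query1DFC}$(FC_y,40)$ obtains index $2$ at level $0$ and follows the pointer of $45$ to $46$ in $FC_y[1]$. It then steps back to $32$ because $32\le 40$, so \texttt{start\_indices}$[1]=1$. The scan then reports $[32,32]\in L[1]$, since its only test is $32\le 60$, although this interval misses $[40,60]$. The error also grows along the cascade: levels $3$ and $4$ receive index $0$ instead of $2$. So even your weakened ``within one'' lemma fails for the literal code, and the proposition is false for Algorithm~\ref{alg:query2DFC} as written.

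\textbf{Why your fallback does not help, and what to prove instead.} Your fallback weakens only the guarantee on \texttt{end\_indices}. But both arrays come from the same routine, and the scan contains no test $y_{i,j}\ge x'$. Any error in \texttt{start\_indices} therefore passes straight into the output: too small gives false positives, too large gives false negatives. You need to do the following.
\begin{enumerate}
\item State the repair explicitly: replace the test by $FC[i+1][pointer-1][0]\ge x$. The \texttt{else} branch then correctly steps back at most once from $|FC[i+1]|$.
\item Prove the \emph{exact} invariant: at level $i$ the pointer is the first position of $FC[i]$ with value $\ge x$, and the returned index is $|\{j : L[i][j]<x\}|$. This holds because at most one element of $FC[i+1]$ lies strictly between two consecutive elements of $FC[i]$, since every second element is promoted.
\item Conclude that \texttt{start\_indices}$[i]=\min\{j: y_{i,j}\ge x'\}$ and \texttt{end\_indices}$[i]=\min\{j: x_{i,j}\ge y'\}$. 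Step~2 then goes through, provided the loop is inclusive and capped at $|L[i]|-1$.
\end{enumerate}
The cap matters because the start test absorbs over-inclusion but not under-inclusion; an exclusive loop would lose an interval with $x_{i,j}=y'$.

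\textbf{The QEL link is not immediate.} The window $[q,q+r)$ in $\mdiamond_r$ is half-open, so the boundary comparisons must be adjusted. For closed data intervals the start condition is $x_i<q+r$, not $x_i\le y'$ with $y'=q+r$.
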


The QEL query algorithm for an existence query using 2DFC is shown in Algorithm~\ref{alg:query2DFC}; Figure~\ref{2dfc_example} illustrates the cascading pointer traversal on a worked example.

\begin{figure}
  \centering
  \begin{tikzpicture}[
    font=\scriptsize,
    >=latex,
    NC/.style={draw=gray!60,fill=white,minimum width=0.44cm,minimum height=0.42cm,inner sep=0.6pt,anchor=center},
    GC/.style={draw=green!50!black,fill=green!20,minimum width=0.44cm,minimum height=0.42cm,inner sep=0.6pt,anchor=center},
    BC/.style={draw=blue!60,fill=blue!15,minimum width=0.44cm,minimum height=0.42cm,inner sep=0.6pt,anchor=center},
    RC/.style={draw=red!70,fill=white,minimum width=0.44cm,minimum height=0.42cm,inner sep=0.6pt,anchor=center,very thick},
  ]

  \def\CW{0.44}
  \def\RH{0.42}
  \def\TR{0.42}
  \def\LX{0.0}
  \def\RX{5.2}
  \def\TYa{0.00}
  \def\TYb{1.55}
  \def\TYc{3.10}
  \def\TYd{4.65}
  \def\TYe{6.20}

  \def\ILY{3.5}
  \def\ILS{0.55}

  \node[anchor=south west, font=\scriptsize\bfseries] at (-0.1, \ILY+0.30)
    {Original 2D interval array};

  \foreach \ii/\lbl in {0/$L[0]$,1/$L[1]$,2/$L[2]$,3/$L[3]$,4/$L[4]$}{
    \node[anchor=east,font=\scriptsize] at (-0.1, \ILY-\ii*\ILS) {\lbl};
  }

  \foreach \kk/\es/\ee in {0/1/2, 1/7/16, 2/42/45, 3/67/71, 4/72/75}{
    \pgfmathsetmacro{\ix}{\kk*1.1}
    \node[GC,font=\tiny] at (\ix+0.22, \ILY) {\textcolor{green!50!black}{\es}};
    \node[BC,font=\tiny] at (\ix+0.66, \ILY) {\textcolor{blue!70}{\ee}};
  }
  \foreach \kk/\es/\ee in {0/1/9, 1/32/32, 2/55/56, 3/58/66}{
    \pgfmathsetmacro{\ix}{\kk*1.1}
    \node[GC,font=\tiny] at (\ix+0.22, \ILY-1*\ILS) {\textcolor{green!50!black}{\es}};
    \node[BC,font=\tiny] at (\ix+0.66, \ILY-1*\ILS) {\textcolor{blue!70}{\ee}};
  }
  \foreach \kk/\es/\ee in {0/2/10, 1/19/29, 2/93/99}{
    \pgfmathsetmacro{\ix}{\kk*1.1}
    \node[GC,font=\tiny] at (\ix+0.22, \ILY-2*\ILS) {\textcolor{green!50!black}{\es}};
    \node[BC,font=\tiny] at (\ix+0.66, \ILY-2*\ILS) {\textcolor{blue!70}{\ee}};
  }
  \foreach \kk/\es/\ee in {0/6/12, 1/31/32, 2/38/46, 3/56/63}{
    \pgfmathsetmacro{\ix}{\kk*1.1}
    \node[GC,font=\tiny] at (\ix+0.22, \ILY-3*\ILS) {\textcolor{green!50!black}{\es}};
    \node[BC,font=\tiny] at (\ix+0.66, \ILY-3*\ILS) {\textcolor{blue!70}{\ee}};
  }
  \foreach \kk/\es/\ee in {0/16/20, 1/32/38, 2/47/51, 3/51/52, 4/81/81}{
    \pgfmathsetmacro{\ix}{\kk*1.1}
    \node[GC,font=\tiny] at (\ix+0.22, \ILY-4*\ILS) {\textcolor{green!50!black}{\es}};
    \node[BC,font=\tiny] at (\ix+0.66, \ILY-4*\ILS) {\textcolor{blue!70}{\ee}};
  }


  \foreach \ci/\vv/\hi in {%
      0/2/0, 1/16/0, 2/29/0, 3/45/1, 4/46/0, 5/66/0, 6/71/0, 7/75/0}{
    \pgfmathsetmacro{\cx}{\LX+\ci*\CW+\CW/2}
    \ifnum\hi=1 \node[BC] at (\cx, -\TYa) {\textcolor{blue!80}{\vv}};
    \else       \node[NC] at (\cx, -\TYa) {\textcolor{blue!70}{\vv}}; \fi
  }
  \foreach \ci/\vv/\hi in {%
      0/0/0, 1/1/0, 2/2/0, 3/2/1, 4/3/0, 5/3/0, 6/3/0, 7/4/0}{
    \pgfmathsetmacro{\cx}{\LX+\ci*\CW+\CW/2}
    \ifnum\hi=1 \node[BC] at (\cx, -\TYa-\RH) {\vv};
    \else       \node[NC] at (\cx, -\TYa-\RH) {\vv}; \fi
  }
  \foreach \ci/\vv/\hi in {%
      0/0/0, 1/1/0, 2/1/0, 3/3/1, 4/3/0, 5/5/0, 6/6/0, 7/6/0}{
    \pgfmathsetmacro{\cx}{\LX+\ci*\CW+\CW/2}
    \ifnum\hi=1 \node[BC] at (\cx, -\TYa-2*\RH) {\vv};
    \else       \node[NC] at (\cx, -\TYa-2*\RH) {\vv}; \fi
  }
  \pgfmathsetmacro{\rcx}{\LX+3*\CW+\CW/2}   
  \draw[red,very thick] (\rcx, -\TYa-\RH) circle (0.25cm);

  \foreach \ci/\vv/\hi in {%
      0/9/0, 1/29/0, 2/32/0, 3/46/1, 4/56/0, 5/66/0, 6/99/0}{
    \pgfmathsetmacro{\cx}{\LX+\ci*\CW+\CW/2}
    \ifnum\hi=1 \node[BC] at (\cx,-\TYb) {\textcolor{blue!80}{\vv}};
    \else       \node[NC] at (\cx,-\TYb) {\textcolor{blue!70}{\vv}}; \fi
  }
  \foreach \ci/\vv/\hi in {%
      0/0/0, 1/1/0, 2/1/0, 3/2/1, 4/2/0, 5/3/0, 6/4/0}{
    \pgfmathsetmacro{\cx}{\LX+\ci*\CW+\CW/2}
    \ifnum\hi=1 \node[BC] at (\cx,-\TYb-\RH) {\vv};
    \else       \node[NC] at (\cx,-\TYb-\RH) {\vv}; \fi
  }
  \foreach \ci/\vv/\hi in {%
      0/0/0, 1/1/0, 2/1/0, 3/3/1, 4/4/0, 5/5/0, 6/5/0}{
    \pgfmathsetmacro{\cx}{\LX+\ci*\CW+\CW/2}
    \ifnum\hi=1 \node[BC] at (\cx,-\TYb-2*\RH) {\vv};
    \else       \node[NC] at (\cx,-\TYb-2*\RH) {\vv}; \fi
  }

  \foreach \ci/\vv/\hi in {%
      0/10/0, 1/29/0, 2/32/0, 3/46/1, 4/63/0, 5/99/0}{
    \pgfmathsetmacro{\cx}{\LX+\ci*\CW+\CW/2}
    \ifnum\hi=1 \node[BC] at (\cx,-\TYc) {\textcolor{blue!80}{\vv}};
    \else       \node[NC] at (\cx,-\TYc) {\textcolor{blue!70}{\vv}}; \fi
  }
  \foreach \ci/\vv/\hi in {%
      0/0/0, 1/1/0, 2/2/0, 3/2/1, 4/2/0, 5/2/0}{
    \pgfmathsetmacro{\cx}{\LX+\ci*\CW+\CW/2}
    \ifnum\hi=1 \node[BC] at (\cx,-\TYc-\RH) {\vv};
    \else       \node[NC] at (\cx,-\TYc-\RH) {\vv}; \fi
  }
  \foreach \ci/\vv/\hi in {%
      0/0/0, 1/1/0, 2/1/0, 3/3/1, 4/5/0, 5/6/0}{
    \pgfmathsetmacro{\cx}{\LX+\ci*\CW+\CW/2}
    \ifnum\hi=1 \node[BC] at (\cx,-\TYc-2*\RH) {\vv};
    \else       \node[NC] at (\cx,-\TYc-2*\RH) {\vv}; \fi
  }

  \foreach \ci/\vv/\hi in {%
      0/12/0, 1/32/0, 2/38/0, 3/46/1, 4/52/0, 5/63/0}{
    \pgfmathsetmacro{\cx}{\LX+\ci*\CW+\CW/2}
    \ifnum\hi=1 \node[BC] at (\cx,-\TYd) {\textcolor{blue!80}{\vv}};
    \else       \node[NC] at (\cx,-\TYd) {\textcolor{blue!70}{\vv}}; \fi
  }
  \foreach \ci/\vv/\hi in {%
      0/0/0, 1/1/0, 2/2/0, 3/2/1, 4/3/0, 5/3/0}{
    \pgfmathsetmacro{\cx}{\LX+\ci*\CW+\CW/2}
    \ifnum\hi=1 \node[BC] at (\cx,-\TYd-\RH) {\vv};
    \else       \node[NC] at (\cx,-\TYd-\RH) {\vv}; \fi
  }
  \foreach \ci/\vv/\hi in {%
      0/0/0, 1/1/0, 2/1/0, 3/2/1, 4/3/0, 5/4/0}{
    \pgfmathsetmacro{\cx}{\LX+\ci*\CW+\CW/2}
    \ifnum\hi=1 \node[BC] at (\cx,-\TYd-2*\RH) {\vv};
    \else       \node[NC] at (\cx,-\TYd-2*\RH) {\vv}; \fi
  }

  \foreach \ci/\vv/\hi in {%
      0/20/0, 1/38/0, 2/51/1, 3/52/0, 4/81/0}{
    \pgfmathsetmacro{\cx}{\LX+\ci*\CW+\CW/2}
    \ifnum\hi=1 \node[BC] at (\cx,-\TYe) {\textcolor{blue!80}{\vv}};
    \else       \node[NC] at (\cx,-\TYe) {\textcolor{blue!70}{\vv}}; \fi
  }
  \foreach \ci/\vv/\hi in {%
      0/0/0, 1/1/0, 2/2/1, 3/3/0, 4/4/0}{
    \pgfmathsetmacro{\cx}{\LX+\ci*\CW+\CW/2}
    \ifnum\hi=1 \node[BC] at (\cx,-\TYe-\RH) {\vv};
    \else       \node[NC] at (\cx,-\TYe-\RH) {\vv}; \fi
  }
  \foreach \ci/\vv/\hi in {%
      0/0/0, 1/0/0, 2/0/1, 3/0/0, 4/0/0}{
    \pgfmathsetmacro{\cx}{\LX+\ci*\CW+\CW/2}
    \ifnum\hi=1 \node[BC] at (\cx,-\TYe-2*\RH) {\vv};
    \else       \node[NC] at (\cx,-\TYe-2*\RH) {\vv}; \fi
  }


  \foreach \ci/\vv/\hi in {%
      0/1/0, 1/7/0, 2/19/0, 3/38/0, 4/42/0, 5/58/0, 6/67/1, 7/72/0}{
    \pgfmathsetmacro{\cx}{\RX+\ci*\CW+\CW/2}
    \ifnum\hi=1 \node[GC] at (\cx,-\TYa) {\textcolor{green!50!black}{\vv}};
    \else       \node[NC] at (\cx,-\TYa) {\textcolor{green!50!black}{\vv}}; \fi
  }
  \foreach \ci/\vv/\hi in {%
      0/0/0, 1/1/0, 2/2/0, 3/2/0, 4/2/0, 5/2/0, 6/3/1, 7/4/0}{
    \pgfmathsetmacro{\cx}{\RX+\ci*\CW+\CW/2}
    \ifnum\hi=1 \node[GC] at (\cx,-\TYa-\RH) {\vv};
    \else       \node[NC] at (\cx,-\TYa-\RH) {\vv}; \fi
  }
  \foreach \ci/\vv/\hi in {%
      0/0/0, 1/1/0, 2/1/0, 3/3/0, 4/4/0, 5/5/0, 6/6/1, 7/6/0}{
    \pgfmathsetmacro{\cx}{\RX+\ci*\CW+\CW/2}
    \ifnum\hi=1 \node[GC] at (\cx,-\TYa-2*\RH) {\vv};
    \else       \node[NC] at (\cx,-\TYa-2*\RH) {\vv}; \fi
  }
  \pgfmathsetmacro{\rrcx}{\RX+6*\CW+\CW/2}   
  \draw[red,very thick] (\rrcx, -\TYa-\RH) circle (0.25cm);

  \foreach \ci/\vv/\hi in {%
      0/1/0, 1/19/0, 2/32/0, 3/38/0, 4/55/0, 5/58/0, 6/93/1}{
    \pgfmathsetmacro{\cx}{\RX+\ci*\CW+\CW/2}
    \ifnum\hi=1 \node[GC] at (\cx,-\TYb) {\textcolor{green!50!black}{\vv}};
    \else       \node[NC] at (\cx,-\TYb) {\textcolor{green!50!black}{\vv}}; \fi
  }
  \foreach \ci/\vv/\hi in {%
      0/0/0, 1/1/0, 2/1/0, 3/2/0, 4/2/0, 5/3/0, 6/4/1}{
    \pgfmathsetmacro{\cx}{\RX+\ci*\CW+\CW/2}
    \ifnum\hi=1 \node[GC] at (\cx,-\TYb-\RH) {\vv};
    \else       \node[NC] at (\cx,-\TYb-\RH) {\vv}; \fi
  }
  \foreach \ci/\vv/\hi in {%
      0/0/0, 1/1/0, 2/3/0, 3/3/0, 4/4/0, 5/5/0, 6/5/1}{
    \pgfmathsetmacro{\cx}{\RX+\ci*\CW+\CW/2}
    \ifnum\hi=1 \node[GC] at (\cx,-\TYb-2*\RH) {\vv};
    \else       \node[NC] at (\cx,-\TYb-2*\RH) {\vv}; \fi
  }

  \foreach \ci/\vv/\hi in {%
      0/2/0, 1/19/0, 2/31/0, 3/38/0, 4/56/0, 5/93/1}{
    \pgfmathsetmacro{\cx}{\RX+\ci*\CW+\CW/2}
    \ifnum\hi=1 \node[GC] at (\cx,-\TYc) {\textcolor{green!50!black}{\vv}};
    \else       \node[NC] at (\cx,-\TYc) {\textcolor{green!50!black}{\vv}}; \fi
  }
  \foreach \ci/\vv/\hi in {%
      0/0/0, 1/1/0, 2/2/0, 3/2/0, 4/2/0, 5/2/1}{
    \pgfmathsetmacro{\cx}{\RX+\ci*\CW+\CW/2}
    \ifnum\hi=1 \node[GC] at (\cx,-\TYc-\RH) {\vv};
    \else       \node[NC] at (\cx,-\TYc-\RH) {\vv}; \fi
  }
  \foreach \ci/\vv/\hi in {%
      0/0/0, 1/1/0, 2/1/0, 3/1/0, 4/1/0, 5/2/1}{
    \pgfmathsetmacro{\cx}{\RX+\ci*\CW+\CW/2}
    \ifnum\hi=1 \node[GC] at (\cx,-\TYc-2*\RH) {\vv};
    \else       \node[NC] at (\cx,-\TYc-2*\RH) {\vv}; \fi
  }

  \foreach \ci/\vv/\hi in {%
      0/6/0, 1/31/0, 2/32/0, 3/38/0, 4/51/0, 5/56/0}{
    \pgfmathsetmacro{\cx}{\RX+\ci*\CW+\CW/2}
    \ifnum\hi=1 \node[GC] at (\cx,-\TYd) {\textcolor{green!50!black}{\vv}};
    \else       \node[NC] at (\cx,-\TYd) {\textcolor{green!50!black}{\vv}}; \fi
  }
  \foreach \ci/\vv/\hi in {%
      0/0/0, 1/1/0, 2/2/0, 3/2/0, 4/3/0, 5/3/0}{
    \pgfmathsetmacro{\cx}{\RX+\ci*\CW+\CW/2}
    \ifnum\hi=1 \node[GC] at (\cx,-\TYd-\RH) {\vv};
    \else       \node[NC] at (\cx,-\TYd-\RH) {\vv}; \fi
  }
  \foreach \ci/\vv/\hi in {%
      0/0/0, 1/1/0, 2/1/0, 3/2/0, 4/3/0, 5/4/0}{
    \pgfmathsetmacro{\cx}{\RX+\ci*\CW+\CW/2}
    \ifnum\hi=1 \node[GC] at (\cx,-\TYd-2*\RH) {\vv};
    \else       \node[NC] at (\cx,-\TYd-2*\RH) {\vv}; \fi
  }

  \foreach \ci/\vv/\hi in {%
      0/16/0, 1/32/0, 2/47/0, 3/51/0, 4/81/1}{
    \pgfmathsetmacro{\cx}{\RX+\ci*\CW+\CW/2}
    \ifnum\hi=1 \node[GC] at (\cx,-\TYe) {\textcolor{green!50!black}{\vv}};
    \else       \node[NC] at (\cx,-\TYe) {\textcolor{green!50!black}{\vv}}; \fi
  }
  \foreach \ci/\vv/\hi in {%
      0/0/0, 1/1/0, 2/2/0, 3/3/0, 4/4/1}{
    \pgfmathsetmacro{\cx}{\RX+\ci*\CW+\CW/2}
    \ifnum\hi=1 \node[GC] at (\cx,-\TYe-\RH) {\vv};
    \else       \node[NC] at (\cx,-\TYe-\RH) {\vv}; \fi
  }
  \foreach \ci/\vv/\hi in {%
      0/0/0, 1/0/0, 2/0/0, 3/0/0, 4/0/1}{
    \pgfmathsetmacro{\cx}{\RX+\ci*\CW+\CW/2}
    \ifnum\hi=1 \node[GC] at (\cx,-\TYe-2*\RH) {\vv};
    \else       \node[NC] at (\cx,-\TYe-2*\RH) {\vv}; \fi
  }

  \foreach \ii/\ty in {0/\TYa, 1/\TYb, 2/\TYc, 3/\TYd, 4/\TYe}{
    \node[anchor=east,font=\scriptsize] at (\LX-0.05, -\ty-\RH)
      {$\mathrm{FC}_y[\ii]$};
    \node[anchor=west,font=\scriptsize] at (\RX+8*\CW+0.05, -\ty-\RH)
      {$\mathrm{FC}_x[\ii]$};
  }

  \node[anchor=west,font=\scriptsize,align=left] at (\LX+3.75,-\TYa-\RH)   {Index in L};
  \node[anchor=west,font=\scriptsize,align=left] at (\LX+3.75,-\TYa-2*\RH) {succ};

  \node[font=\scriptsize] at (4.5, 0.55)
    {Query: $[\textcolor{blue!70}{40},\textcolor{green!50!black}{60}]$};

  \pgfmathsetmacro{\ax}{\LX+3*\CW+\CW/2}
  \draw[->,blue!60,dashed] (\ax,-\TYa-2*\RH-0.12) -- (\ax,-\TYb+0.12);
  \draw[->,blue!60,dashed] (\ax,-\TYb-2*\RH-0.12) -- (\ax,-\TYc+0.12);
  \draw[->,blue!60,dashed] (\ax,-\TYc-2*\RH-0.12) -- (\ax,-\TYd+0.12);
  \pgfmathsetmacro{\axe}{\LX+2*\CW+\CW/2}
  \draw[->,blue!60,dashed] (\ax,-\TYd-2*\RH-0.12) -- (\axe,-\TYe+0.12);

  \pgfmathsetmacro{\bx}{\RX+6*\CW+\CW/2}
  \draw[->,green!50!black,dashed] (\bx,-\TYa-2*\RH-0.12) -- (\bx,-\TYb+0.12);
  \pgfmathsetmacro{\bxb}{\RX+5*\CW+\CW/2}
  \draw[->,green!50!black,dashed] (\bx,-\TYb-2*\RH-0.12) -- (\bxb,-\TYc+0.12);
  \pgfmathsetmacro{\bxcend}{\RX+6*\CW}
  \pgfmathsetmacro{\bxcsrc}{\RX+5*\CW+\CW/2}
  \draw[->,green!50!black,dashed] (\bxb,-\TYc-2*\RH-0.12) -- (\bxcend,-\TYd+0.12);
  \pgfmathsetmacro{\bxd}{\RX+4*\CW+\CW/2}
  \draw[->,green!50!black,dashed] (\bxcsrc,-\TYd-2*\RH-0.12) -- (\bxd,-\TYe+0.12);

  \pgfmathsetmacro{\qrrcx}{\RX+6*\CW+\CW/2}   
  \pgfmathsetmacro{\qrcx}{\LX+3*\CW+\CW/2}    
  \draw[->,green!50!black,thick] (5.1,0.45) -- (\qrrcx+0.1,-\TYa+0.28);
  \draw[->,blue!70,thick]        (4.7,0.45) -- (\qrcx+0.1,-\TYa+0.28);

  \draw[->, red, very thick]
    (\rcx, -\TYa-\RH+0.27)
    to[out=90, in=180]
    (2.08, \ILY);

  \draw[->, red, very thick]
    (\rrcx, -\TYa-\RH+0.27)
    to[out=90, in=0]
    (3.20, \ILY);

  \draw[->,orange!80!black, thick, dashed]
    (-0.75, \ILY-4*\ILS)           
    to[out=200, in=160]
    (-0.95, -\TYe-\RH);            

  \draw[->, orange!80!black, very thick]
    (-0.95, -\TYe-2*\RH-0.10)     
    -- (-0.95, -\TYa+0.10);        

  \node[font=\tiny, orange!80!black, rotate=90, anchor=south] at (-1.1, -3.2)
    {FC built bottom$\to$top};

  \foreach \ecx/\ecy in {%
    2.64/3.50,%
    2.64/2.95, 3.74/2.95,%
    2.64/1.85, 3.74/1.85,%
    2.64/1.30, 3.74/1.30%
  }{
    \draw[red,very thick] (\ecx, \ecy) ellipse (0.56 and 0.27);
  }

  \end{tikzpicture}
  \caption{\small An example to illustrate the construction of 2DFC and the existence query
    with input interval $[40,60]$. Extracted target ranges are navigated via pointers (boxes),
    triggering a localized scan (arrows), and matching intervals are returned (circles).}
  \label{2dfc_example}
\end{figure}

\paragraph{Complexity analysis.}
Leveraging the FC data structure, given a 2-dimensional query interval $q=[x',y']$ and a dataset $L$ comprising $m$ sorted interval lists with a total of $n$ intervals, we can efficiently locate the position of $x'$ in every end-point list and the position of $y'$ in every start-point list. In Algorithm~\ref{alg:query2DFC}, the \textsc{Query1DFC} procedure performs a single binary search on the augmented first list $FC[0]$ and then cascades through the remaining $m-1$ lists in $O(1)$ time per list. The size of $FC[0]$ is bounded by $|FC[0]| \leq n_1 + \tfrac{n_2}{2} + \tfrac{n_3}{4} + \cdots$. Since $\sum_{i \geq 2} \frac{n_i}{2^{i-1}} \leq \sum_{i \geq 1} n_i = n$, the total is at most $n_1 + n \leq 2n$, where $n_i$ denotes the number of intervals in the $i$-th list. When the lists are approximately balanced (i.e., $n_i \approx n/m$), we have $|FC[0]| \leq 2n/m = 2n_1$, so the binary search takes $O(\log n_1)$ time. In the general (unbalanced) case, $|FC[0]|$ can be as large as $O(n)$, making the binary search $O(\log n)$.
Since the algorithm invokes \textsc{Query1DFC} twice (once for start points, once for end points) and then scans the candidate intervals in $O(m + k)$ time, the total query time is:
\[
O(\log |FC[0]| + m + k),
\]
which is $O(\log n_1 + m + k)$ when the lists are balanced and $O(\log n + m + k)$ in the worst case. Here $k$ is the number of reported intervals. Compared to the 2DRT query time of $O(\log^2 n + k)$ and the RTFC query time of $O(\log n + k)$, 2DFC trades a small additive $O(m)$ term for a potentially reduced logarithmic factor. In typical sleep study datasets where $m \ll n$ (e.g., $m = 20$ event types and $n > 200{,}000$ intervals), the $O(m)$ term is negligible and 2DFC achieves competitive or superior query performance.

\paragraph{Space complexity.}
By the standard fractional cascading result~\cite{fc}, the total number of elements across all lists in a single FC structure is at most $2n$. Since 2DFC constructs two independent FC structures (one for start points, one for end points), the total storage is at most $4n = O(n)$. This is a significant improvement over both 2DRT and RTFC, which require $O(n \log n)$ space due to the replication of elements across $O(\log n)$ levels of the range tree.

\subsubsection{Fully Connected Fractional Cascading for Pattern Matching (FCFC)}
\label{fcfc}
While 2DFC is designed for efficient event data extraction, such as retrieving all events that occur within a given time window, pattern matching queries require a complementary strategy. In pattern matching, the goal is to identify temporal relationships (e.g., co-occurrence, precedence, bounded delay) between a specified \emph{anchor} event and one or more \emph{target} events across the entire timeline. This corresponds directly to the dual-event query templates defined in Section~\ref{sec:template}, such as the ``Before'' variants and ``Co-occurrence'' patterns.

As discussed in Section~\ref{sec:queries}, NFA-based approaches (e.g., \texttt{MATCH\_RECOGNIZE}~\cite{iso2016sql}) require flattening concurrent interval streams into a single row sequence, lack first-class metric temporal operators, and do not exploit the non-overlapping BEST structure. To overcome these limitations, we introduce FCFC, which connects all interval boundaries to a global boundary index and precomputes event-specific successor arrays. The term ``fractional cascading'' in FCFC refers to this propagation of boundary positions through a fully connected global backbone, rather than the classical list-sampling structure used in 2DFC.

The FCFC structure is built from the sorted, non-overlapping interval ensembles in a BEST. Let $Global$ be the sorted set of all starts and ends across all event labels. For each label $i$ and each global boundary $Global[j]$, FCFC stores (i) the first interval of label $i$ whose start is not before $Global[j]$, supporting bounded-future and precedence tests, and (ii) the first interval of label $i$ whose end is after $Global[j]$, supporting overlap/co-occurrence tests. Because interval boundaries are also linked to their positions in $Global$ during construction, a query can move from an anchor interval boundary to the relevant target-event candidate in $O(1)$ time per target label.

\begin{algorithm}
  \caption{Build FCFC}
  \label{alg:buildFCFC}
  \SetKwFunction{BuildFCFC}{BuildFCFC}
  \KwIn{The original 2D array of sorted, non-overlapping intervals $L$}
  \KwOut{Global index array $Global$, next-start arrays $NextStart$, next-live arrays $NextLive$, and boundary links}
  \SetKwProg{Fn}{Function}{:}{}
  \Fn{\BuildFCFC{$L$}}{
    $Global \leftarrow \text{Sort unique starts and ends across all lists in } L$\;
    $NextStart \leftarrow \text{Empty array of size } |L| \times |Global|$\;
    $NextLive \leftarrow \text{Empty array of size } |L| \times |Global|$\;
    \For{$i \leftarrow 0$ \KwTo $|L|-1$}{
      $s \leftarrow 0$; $e \leftarrow 0$\;
      \For{$j \leftarrow 0$ \KwTo $|Global|-1$}{
        \While{$s < |L[i]|$ and $L[i][s].start < Global[j]$}{
          $s \leftarrow s + 1$\;
        }
        \While{$e < |L[i]|$ and $L[i][e].end \leq Global[j]$}{
          $e \leftarrow e + 1$\;
        }
        $NextStart[i][j] \leftarrow s$ \tcp*{first interval starting at or after $Global[j]$}
        $NextLive[i][j] \leftarrow e$ \tcp*{first interval ending after $Global[j]$}
      }
    }
    Link each interval boundary in $L$ to its index in $Global$\;
    \KwRet{$Global, NextStart, NextLive$}\;
  }
\end{algorithm}

\begin{algorithm}
  \caption{Pattern Matching Query (Co-occurrence) by FCFC}
  \label{alg:queryFCFC}
  \SetKwFunction{QueryFCFC}{QueryFCFC}
  \KwIn{Interval lists $L$, anchor event index $A$, target event index $T$, global-index links, and $NextLive$}
  \KwOut{Intervals of $A$ overlapping at least one interval of $T$}
  \SetKwProg{Fn}{Function}{:}{}
  \Fn{\QueryFCFC{$L$, $A$, $T$, $NextLive$}}{
    $Result \leftarrow \text{Empty list}$\;
    \ForEach{$interval$ \textbf{in} $L[A]$}{
      $j \leftarrow interval.global\_start\_idx$\;
      $c \leftarrow NextLive[T][j]$\;
      \If{$c < |L[T]|$ and $L[T][c].start < interval.end$}{
        Add $interval$ to $Result$\;
      }
    }
    \KwRet{$Result$}\;
  }
\end{algorithm}

\begin{proposition}
Given a valid FCFC structure over normalized non-overlapping interval ensembles, Algorithm~\ref{alg:queryFCFC} returns exactly the anchor intervals of label $A$ that overlap at least one target interval of label $T$.
\end{proposition}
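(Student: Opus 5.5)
The plan is to reduce the claim to two monotonicity facts about a sorted, non-overlapping ensemble, and then check soundness and completeness of the single test in Algorithm~\ref{alg:queryFCFC}. First I fix the overlap notion the algorithm implements. An anchor interval with endpoints $a\le b$ and a target interval with endpoints $c\le d$ are taken to overlap iff $c<b$ and $d>a$. This is the non-strict-boundary convention of the co-occurrence template $\bigl(\Diamond_m(\ell_A\land\ell_B)\bigr)_0$, instantiated with left-closed right-open intervals. The proof will only use these two strict inequalities, so other choices of $\preceq,\ll$ would change only the comparison operators, not the argument.

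\textbf{Step 1 (order structure).} Because $\varepsilon(T)$ is normalized, its intervals are pairwise non-overlapping and listed by start time. I will show that the end points $L[T][0].end\le L[T][1].end\le\cdots$ are then also non-decreasing. Indeed, if $k<k'$ but $L[T][k].end > L[T][k'].end$, the $k'$-th interval would lie inside the $k$-th one, contradicting non-overlap.

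\textbf{Step 2 (build invariant).} I will prove by induction on $j$ that after the inner loop of Algorithm~\ref{alg:buildFCFC}, $NextLive[T][j]$ equals $\min\{e : L[T][e].end > Global[j]\}$, or $|L[T]|$ if no such $e$ exists. The base case is immediate. The inductive step uses two facts: $Global$ is strictly increasing, and by Step 1 the predicate $L[T][e].end\le Global[j]$ is downward closed in $e$. Hence the pointer never needs to move backwards, and the while-loop stops exactly at the first index whose end exceeds $Global[j]$. The boundary links give $Global[j]=a$ whenever $j$ is the $global\_start\_idx$ of the anchor interval, so $c:=NextLive[T][j]$ is the first target interval ending after $a$.

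\textbf{Step 3 (soundness and completeness).} For soundness, suppose the test succeeds, so $c<|L[T]|$ and $L[T][c].start<b$. By Step 2 we also have $L[T][c].end>a$, so interval $c$ overlaps the anchor. For completeness, suppose some target interval $k$ overlaps the anchor. Then $L[T][k].end>a$, so $k\ge c$ by the minimality in Step 2, and in particular $c<|L[T]|$. Since starts are sorted, $L[T][c].start\le L[T][k].start<b$, and the test succeeds. Each anchor interval is added at most once, so the result is exactly the set of anchor intervals of $A$ that overlap some interval of $T$.

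I expect the main obstacle to be boundary handling rather than the combinatorics. Specifically: (i) degenerate point intervals $[s,s]$, where strict inequalities can wrongly reject a closed point that touches an endpoint; and (ii) duplicate values collapsed in $Global$, which requires the link to map each start to the unique position of its value. I would handle (i) by stating the overlap convention explicitly and noting that closed endpoints need the analogous non-strict test. For (ii), uniqueness of the sorted array makes the link well defined, and Step 2 depends only on the value $Global[j]$, not on its multiplicity.
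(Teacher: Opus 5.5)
Your proof is correct and follows the same route as the paper. The paper's sketch is exactly your Step~3, under the same half-open convention $[a,b)$: the lookup at the anchor start $a$ returns the first target interval ending after $a$, all earlier targets end at or before $a$, and sorted starts rule out every later target once that first one starts at or after $b$. Your Steps~1--2 additionally verify what the paper folds into the hypothesis of a ``valid FCFC structure,'' namely that the build sweep actually stores this minimal index; Step~1 is not needed for this, since every index the pointer has passed ends at or before some $Global[j']$ with $j'\le j$.
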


\noindent\emph{Proof sketch.}
For an anchor interval $[a,b)$, the FCFC lookup $NextLive[T][j]$ at the global position of $a$ returns the first target interval whose end is greater than $a$. All earlier target intervals end at or before $a$ and therefore cannot overlap the anchor. If this first not-yet-ended target interval starts before $b$, then it overlaps $[a,b)$ and the anchor is returned. If it starts at or after $b$, then every later target interval starts no earlier and therefore also cannot overlap $[a,b)$. Thus the test is both sound and complete.

Algorithms~\ref{alg:buildFCFC}-\ref{alg:queryFCFC} implement the FCFC build and co-occurrence query; Figure~\ref{fcfc_example} shows the construction and a worked example with three events and nine intervals.
\begin{figure}
  \centering
  \begin{tikzpicture}[
    font=\scriptsize,
    >=latex,
    NC/.style={draw=gray!60,fill=white,minimum width=0.435cm,minimum height=0.44cm,inner sep=0.8pt,anchor=center},
    GC/.style={draw=green!50!black,fill=green!20,minimum width=0.435cm,minimum height=0.44cm,inner sep=0.8pt,anchor=center},
    BC/.style={draw=blue!60,fill=blue!15,minimum width=0.435cm,minimum height=0.44cm,inner sep=0.8pt,anchor=center},
    RC/.style={draw=red!50,fill=red!10,minimum width=0.435cm,minimum height=0.44cm,inner sep=0.8pt,anchor=center},
  ]
  \def\XO{1.15}    
  \def\CW{0.435}   

  \node[anchor=east,font=\small\bfseries] at (\XO-0.05, 0.22) {$L[0]$};
  \node[anchor=west,font=\small] at (\XO+0.05, 0.22)
    {$[\textcolor{green!50!black}{4},\textcolor{blue!60}{8}]\ \
     [\textcolor{green!50!black}{11},\textcolor{blue!60}{14}]\ \
     [\textcolor{green!50!black}{18},\textcolor{blue!60}{22}]$};
  \node[anchor=east,font=\small\bfseries] at (\XO-0.05,-0.33) {$L[1]$};
  \node[anchor=west,font=\small] at (\XO+0.05,-0.33)
    {$[1,3]\ [5,6]\ [13,15]\ [19,21]$};
  \node[anchor=east,font=\small\bfseries] at (\XO-0.05,-0.88) {$L[2]$};
  \node[anchor=west,font=\small] at (\XO+0.05,-0.88)
    {$[6,9]\ [16,17]$};

  \node[anchor=east,font=\small\itshape] at (\XO-0.05,-1.80) {Index};
  \node[anchor=east,font=\small\itshape] at (\XO-0.05,-2.26) {Time};
  \foreach \ii/\tv in {
      0/1,1/3,2/4,3/5,4/6,5/7,6/8,7/9,
      8/11,9/13,10/14,11/15,12/16,13/17,14/18,15/19,16/21,17/22}{
    \pgfmathsetmacro\XC{\XO+(\ii+0.5)*\CW}
    \def\sty{NC}%
    \ifnum\ii=2  \def\sty{GC}\fi%
    \ifnum\ii=8  \def\sty{GC}\fi%
    \ifnum\ii=14 \def\sty{GC}\fi%
    \ifnum\ii=6  \def\sty{BC}\fi%
    \ifnum\ii=10 \def\sty{BC}\fi%
    \ifnum\ii=17 \def\sty{BC}\fi%
    \node[\sty] at (\XC,-1.80) {\ii};%
    \node[\sty] at (\XC,-2.26) {\tv};%
  }

  \node[anchor=east,font=\small\bfseries] at (\XO-0.05,-3.26) {$FC[0]$};
  \pgfmathsetmacro\Xtwo  {\XO+(2 +0.5)*\CW}
  \pgfmathsetmacro\Xeight{\XO+(8 +0.5)*\CW}
  \pgfmathsetmacro\Xfourt{\XO+(14+0.5)*\CW}
  \pgfmathsetmacro\Xsix  {\XO+(6 +0.5)*\CW}
  \pgfmathsetmacro\Xten  {\XO+(10+0.5)*\CW}
  \pgfmathsetmacro\Xsvnt {\XO+(17+0.5)*\CW}
  \node[GC,text=green!50!black,font=\bfseries] (sAt) at (\Xtwo,  -3.04) {4};
  \node[GC] at (\Xtwo,  -3.50) {2};
  \node[GC,text=green!50!black,font=\bfseries] (sBt) at (\Xeight,-3.04) {11};
  \node[GC] at (\Xeight,-3.50) {8};
  \node[GC,text=green!50!black,font=\bfseries] (sCt) at (\Xfourt,-3.04) {18};
  \node[GC] at (\Xfourt,-3.50) {14};
  \node[BC,text=blue!60,font=\bfseries] (eAt) at (\Xsix, -3.04) {8};
  \node[BC] at (\Xsix, -3.50) {6};
  \node[BC,text=blue!60,font=\bfseries] (eBt) at (\Xten, -3.04) {14};
  \node[BC] at (\Xten, -3.50) {10};
  \node[BC,text=blue!60,font=\bfseries] (eCt) at (\Xsvnt,-3.04) {22};
  \node[BC] at (\Xsvnt,-3.50) {17};
  \draw[->,gray!60,thick] (sAt.north) to[out=75,in=105] (eAt.north);
  \draw[->,gray!60,thick] (sBt.north) to[out=75,in=105] (eBt.north);
  \draw[->,gray!60,thick] (sCt.north) to[out=65,in=115] (eCt.north);

  \node[anchor=east,font=\small\bfseries] at (\XO-0.05,-4.63) {$FC[1]$};
  \foreach \ii/\dist in {
      0/0,1/2,2/1,3/0,4/7,5/6,6/5,7/4,
      8/2,9/0,10/0,11/4,12/3,13/2,14/1,15/0,16/{-1},17/{-1}}{
    \pgfmathsetmacro\XC{\XO+(\ii+0.5)*\CW}
    \def\sty{NC}%
    \ifnum\ii=2  \def\sty{GC}\fi%
    \ifnum\ii=8  \def\sty{GC}\fi%
    \ifnum\ii=14 \def\sty{GC}\fi%
    \node[\sty] at (\XC,-4.40) {\ii};%
    \node[\sty] at (\XC,-4.86) {\dist};%
  }
  \foreach \ii in {2,8,14}{
    \pgfmathsetmacro\XC{\XO+(\ii+0.5)*\CW}
    \node[green!50!black,font=\normalsize] at (\XC,-5.20) {$\checkmark$};%
  }

  \node[anchor=east,font=\small\bfseries] at (\XO-0.05,-6.02) {$FC[2]$};
  \foreach \ii/\dist in {
      0/5,1/3,2/2,3/1,4/0,5/0,6/0,7/7,
      8/5,9/3,10/2,11/1,12/0,13/{-1},14/{-1},15/{-1},16/{-1},17/{-1}}{
    \pgfmathsetmacro\XC{\XO+(\ii+0.5)*\CW}
    \def\sty{NC}%
    \ifnum\ii=2  \def\sty{GC}\fi%
    \ifnum\ii=8  \def\sty{RC}\fi%
    \ifnum\ii=14 \def\sty{RC}\fi%
    \node[\sty] at (\XC,-5.78) {\ii};%
    \node[\sty] at (\XC,-6.24) {\dist};%
  }
  \node[green!50!black,font=\normalsize] at (\Xtwo,  -6.60) {$\checkmark$};
  \node[red!70!black,  font=\normalsize] at (\Xeight,-6.60) {$\times$};
  \node[red!70!black,  font=\normalsize] at (\Xfourt,-6.60) {$\times$};

  \foreach \ii in {2,8,14}{
    \pgfmathsetmacro\XCl{\XO+(\ii+0.5)*\CW-0.08}
    \pgfmathsetmacro\XCr{\XO+(\ii+0.5)*\CW+0.08}
    \draw[->,thin,gray!70] (\XCl,-3.72) -- (\XCl,-4.16);%
    \draw[->,thin,gray!70,dashed] (\XCr,-3.72) -- (\XCr,-5.54);%
  }
  \end{tikzpicture}
  \captionof{figure}{\small An example to illustrate the construction of FCFC and the query to find all ``Co-occurrence'' patterns between $L[0]$ and $L[1]$, and $L[0]$ and $L[2]$}
  \label{fcfc_example}
\end{figure}

\paragraph{Complexity analysis.}
Consider a dataset $L$ with $m$ event types and $l$ total intervals. The global index contains at most $2l$ boundary points. FCFC stores two index arrays per event type over this global boundary list, so the asymptotic storage is $O(ml)$ and the construction time is $O(ml)$ after the global boundary list has been sorted. For a pattern-matching query with $a$ anchor intervals and $k$ target event types, each anchor-target check is a constant-time lookup followed by one interval-boundary comparison. The query time is therefore $O(ka)$, or $O(k l/m)$ under a balanced-event assumption. In practice, choosing the rarest clinically relevant event as the anchor often decreases $a$ substantially. An NFA-style alternative that repeatedly flattens, sorts, and scans the relevant interval streams requires $O(l' \log l' + l')$ time per query, where $l'$ is the number of intervals participating in the pattern. FCFC avoids this repeated flattening and sorting while preserving the interval semantics needed by the QEL/BEST model-checking layer.

\section{Results}
\label{sec:results}

\subsection{Implementation and Experimental Setup}
We implemented 2DFC and FCFC in Python 3.11 as execution components under the QEL/BEST model-checking layer. We also implemented 2DRT and RTFC for comparison. Build times for 2DRT and RTFC include the required flattening or restructuring of multi-stream input; 2DFC takes the raw sorted interval lists directly, exploiting the BEST invariant. FCFC precomputes event-distance arrays over a global boundary index. Persistent structures are backed by MongoDB for large-scale testing and deployment compatibility with NSRR-style data services.

\subsection{Dataset}
\label{sec:exp_dataset}
Synthetic datasets consist of $m$ events with $n'$ non-overlapping intervals each, generated uniformly on $[0,200{,}000{,}000]$ with max length 100. Three dataset collections vary: (1) event count with constant intervals/event; (2) event count with constant total intervals; and (3) intervals/event with constant event count.

CCSHS dataset within NSRR was used for real-world evaluation because it is a population-based pediatric cohort with standardized expert scoring, 23 event labels, and over 200k non-overlapping annotation intervals, providing rich sleep-stage, respiratory, desaturation, arousal, and limb-movement information suitable for QEL/BEST temporal phenotypes and the 2DFC/FCFC index structures. While all NSRR datasets share a common sleep data format and could in principle be queried by the same framework, they target different cohorts and study designs; adding multiple datasets would not substantively change the behavior of the temporal query patterns evaluated here and a single, well-characterized cohort is straightforward to scale up synthetically to stress-test performance without introducing additional cohort-specific variability.

\subsection{2DFC query performance}
We first evaluated the data structure build time of 2DFC compared to RTFC and 2DRT on synthetic datasets (Table~\ref{tab:result_build}). Because the interval lists in an annotated PSG are naturally pre-sorted (see Definition~\ref{def:interval_ensemble}), 2DFC takes advantage of this property to build in linear $O(n)$ time and $O(n)$ space. In contrast, 2DRT and RTFC require $O(n\log n)$ time. Consequently, 2DFC represents the fastest indexing approach for datasets up to 50 million intervals; as datasets grow larger, RTFC gradually overtakes 2DRT while 2DFC remains significantly more efficient than both.

\begin{table}[h]
  \begin{center}
  \caption{Data structure build time in seconds for 2DFC, RTFC, and 2DRT on different sizes of synthetic data.}
  \begin{tabular}{| r | r | r | r | r | } \hline
  {\bf Num of Events } & {\bf Total Num of Intervals} & {\bf 2DFC } & {\bf RTFC} & {\bf 2DRT} \\ \hline
  200 & 2,000,000 & 20 & 83 & 49 \\ \hline
  1,000 & 2,000,000 & 19 & 90 & 51 \\ \hline
  200 & 10,000,000 & 116 & 476 & 295 \\ \hline
  200 & 50,000,000 & 1,205 & 4,466 & 3,475 \\ \hline
  200 & 90,000,000 & 3,655 & 11,549 & 23,902 \\ \hline
  \end{tabular}
  \label{tab:result_build}
\end{center}
\end{table}

\begin{figure}
  \centering
  \subfloat[Events, constant intervals/event.]{%
    \includegraphics[width=0.31\linewidth]{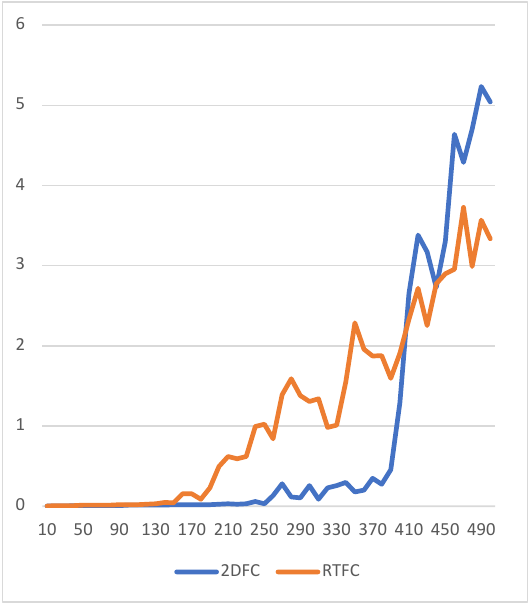}%
    \label{fig:result_event_number2}}
  \hfill
  \subfloat[Events, constant total intervals.]{%
    \includegraphics[width=0.31\linewidth]{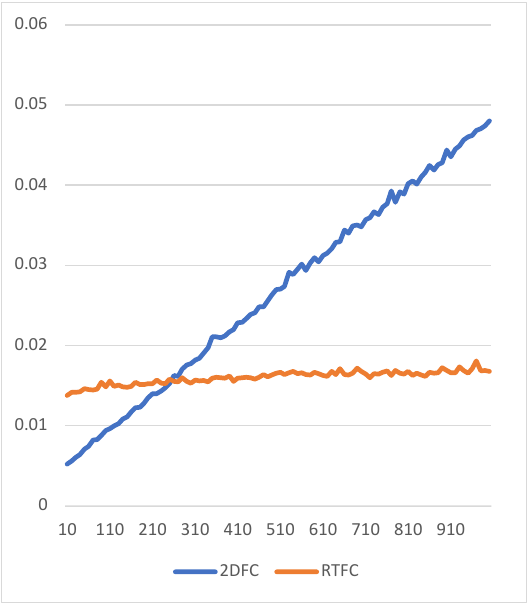}%
    \label{fig:result_event_number1}}
  \hfill
  \subfloat[Intervals/event, constant events.]{%
    \includegraphics[width=0.31\linewidth]{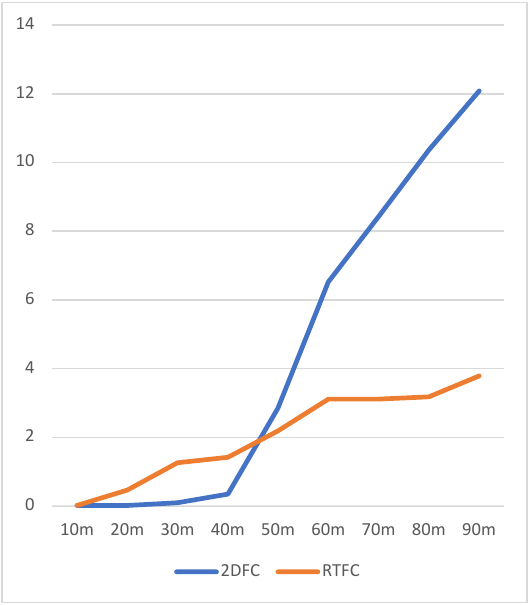}%
    \label{fig:result_interval_number}}\\
  \subfloat[Different query start points.]{%
    \includegraphics[width=0.44\linewidth]{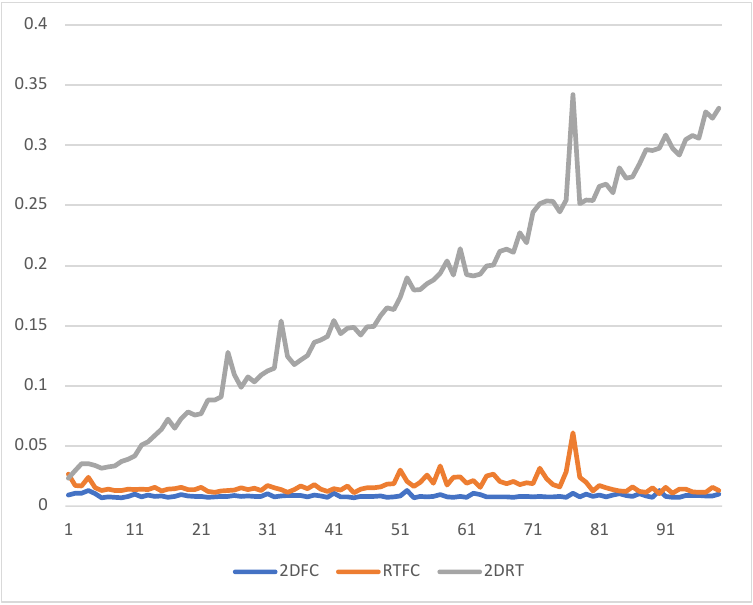}%
    \label{fig:result_query_position}}
  \hfill
  \subfloat[Different query ranges.]{%
    \includegraphics[width=0.44\linewidth]{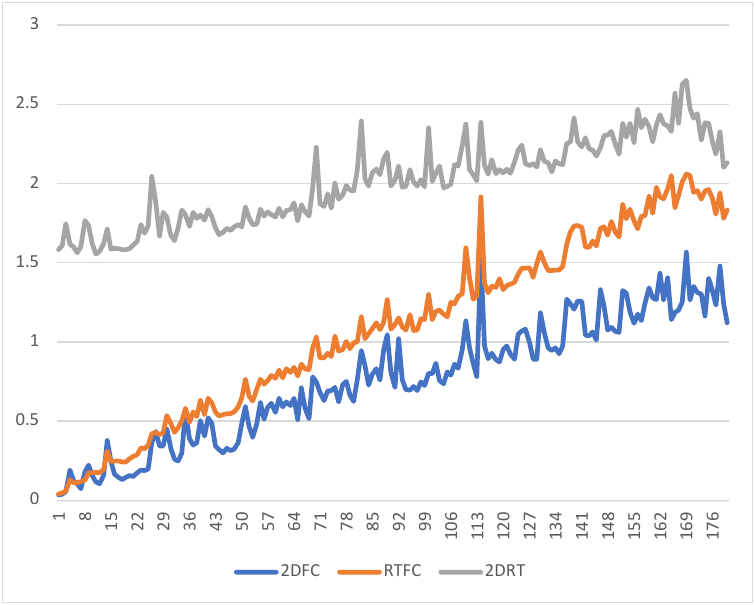}%
    \label{fig:result_query_range}}
  \caption{\small Query performance of 2DFC, RTFC, and 2DRT. (a)-(c) vary event count and interval count; (d)-(e) vary query interval start point and range.}
  \label{fig:result_query_performance}
\end{figure}

To evaluate query execution scalability, we performed three experiments varying the event and interval characteristics of the data. Due to its slow execution speeds at scale, 2DRT was excluded from these tests. 

First, holding the number of intervals per event constant at 100,000 while increasing the event count from 10 to 500 (Figure~\ref{fig:result_event_number2}), 2DFC demonstrated overall lower latency and better scalability for workloads with fewer than 400 events. Second, we fixed the total number of intervals at 2,000,000 and distributed them across 10 to 1,000 events (Figure~\ref{fig:result_event_number1}). Here, 2DFC query time scaled linearly with the event count, though RTFC outperformed 2DFC when querying across smaller event sets (fewer than approximately 210 events). Finally, holding the event count steady at 200 while scaling the total intervals from 10 million to 90 million (Figure~\ref{fig:result_interval_number}), RTFC exhibited more favorable scaling compared to 2DFC. This behavior illustrates the expected algorithmic tradeoff: 2DFC is subject to an additive cost per event type, whereas RTFC relies on a logarithmic range-tree traversal that acts efficiently when pure interval volume dominates.

To assess the sensitivity of the index structures to query parameters, we conducted two additional experiments on smaller fixed datasets, this time including 2DRT for comprehensive comparison. 
First, we tested sensitivity to the query's temporal location, executing 1,000 queries with varying start points over a dataset of 200 events and 100,000 intervals per event (Figure~\ref{fig:result_query_position}). While 2DRT query time scaled linearly as the start point shifted later in time, 2DFC maintained a constant performance independent of the starting temporal boundary. Second, we assessed sensitivity to varying query ranges lengths (Figure~\ref{fig:result_query_range}) using a dataset of 100 events and 20,000 intervals per event. For query lengths spanning 1 million to 90 million units (returning 2,000 to 180,000 results), all internal structures showed linear scaling. In these comparisons, 2DFC consistently provided the lowest query time overall compared to RTFC and 2DRT, with RTFC demonstrating the steepest latency slope.

\begin{figure}
  \centering
  \subfloat[RAM]{
    \includegraphics[width=0.48\textwidth]{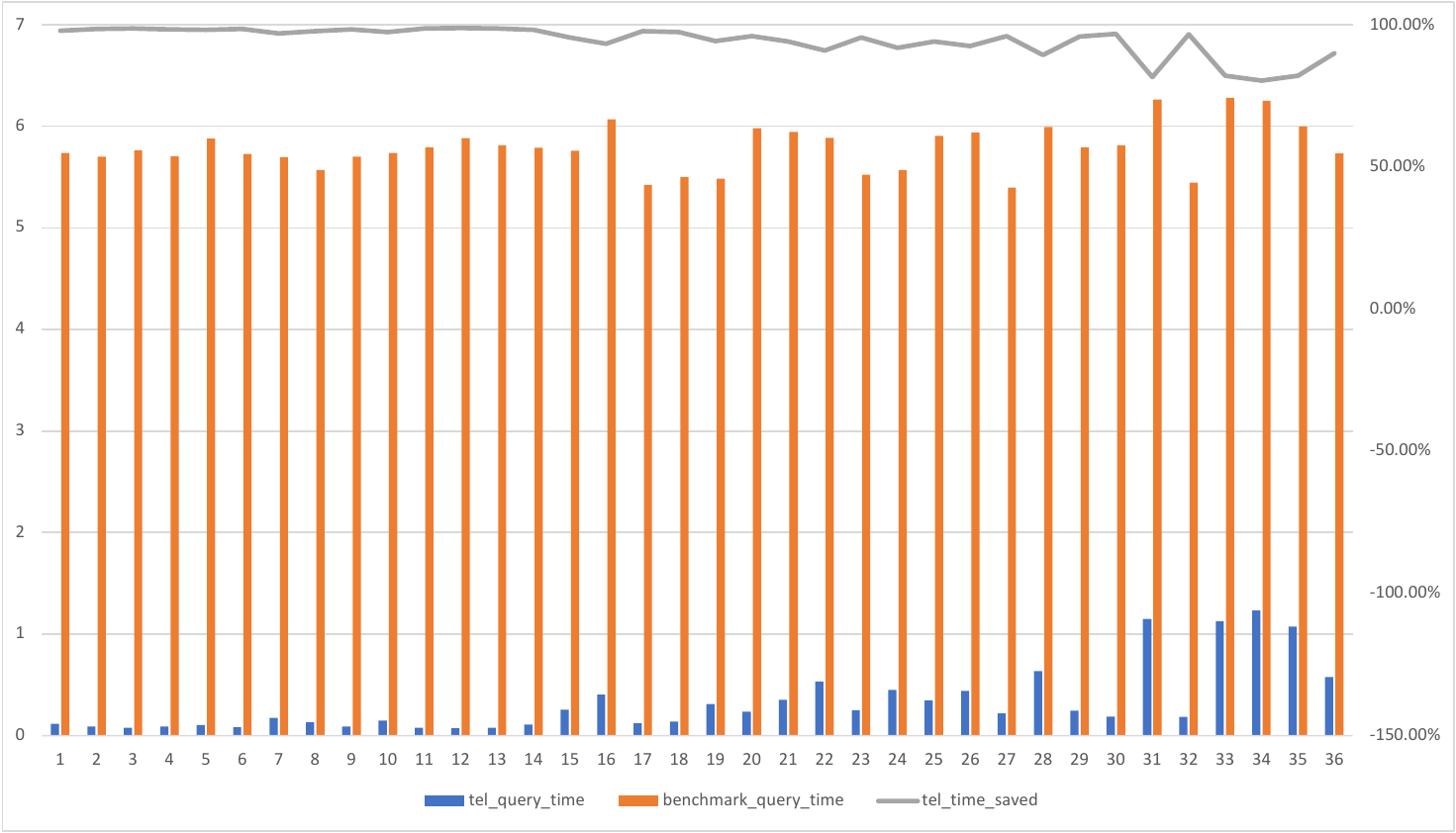}
    \label{fig:fcfc_ram}
  }
  \hfill
  \subfloat[MongoDB]{
    \includegraphics[width=0.48\textwidth]{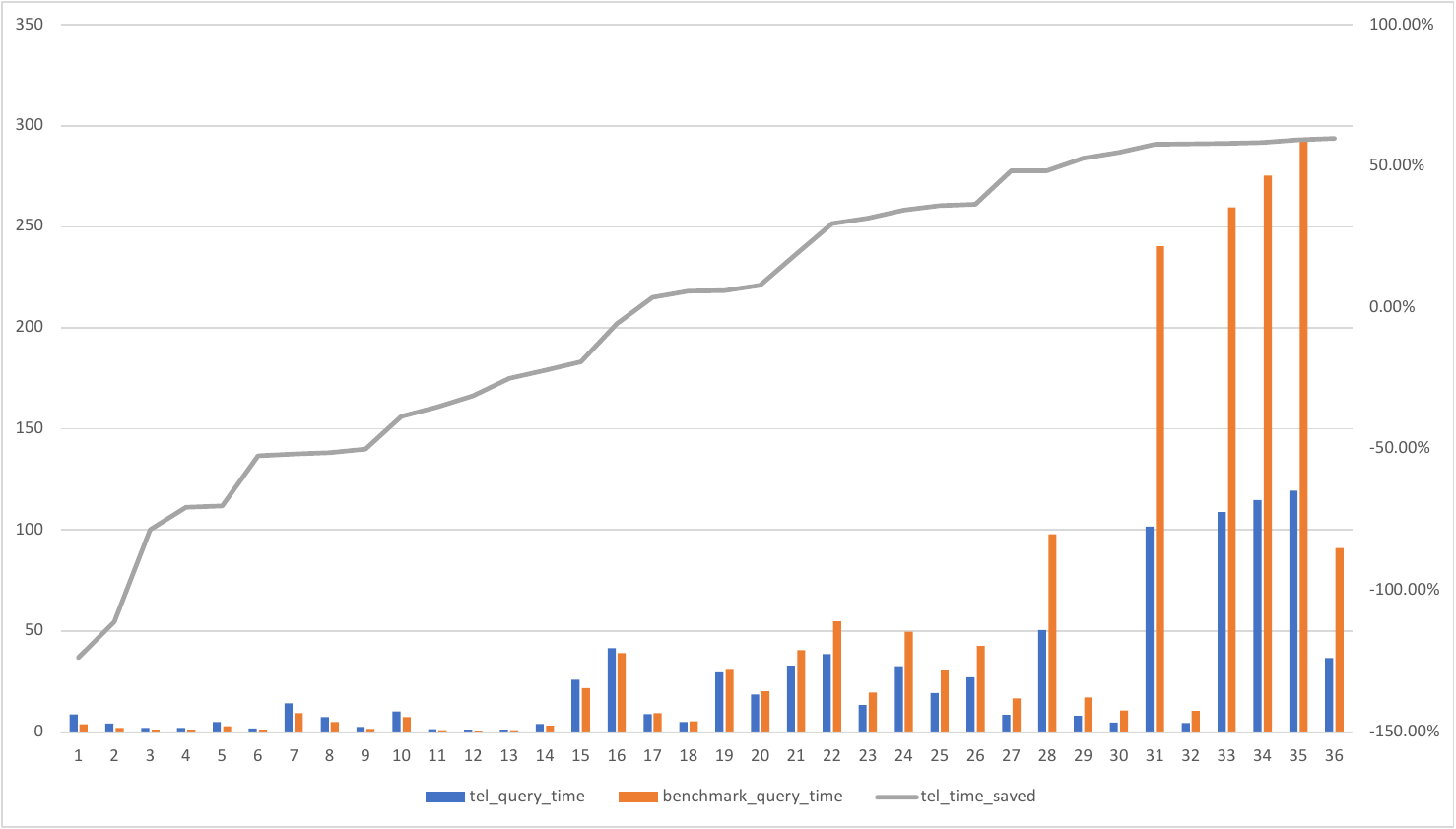}
    \label{fig:fcfc_mongo}
  }
  \caption{Dual-event query performance comparison with 36 testing queries (x axis) on CCSHS dataset. Blue bars represent the ``Before variant 2'' query time using FCFC, and orange bars represent the benchmark query time. Query time is in seconds (left y axis) The gray line represents the query time saved by FCFC in percentage (right y axis). (a) shows the results on RAM, and (b) shows the results on MongoDB.}
  \label{fig:tel_matching_performance}
\end{figure}

\subsection{FCFC query performance}

While 2DFC effectively handles multi-event overlap retrieval natively via bounded queries, the FCFC variant is designed specifically to accelerate sequential dual-event temporal matching. We evaluated FCFC query performance based on 36 dual-event ``Before variant 2'' queries (using a fixed anchor event $A$, 36 distinct target events $B$, and a threshold of $m=600$ time units). We benchmarked FCFC against a conventional NFA-style approach which utilizes a repeated flatten-sort-scan algorithm, testing both over strictly in-memory (RAM) and database-backed (MongoDB) execution environments.

As illustrated in Figure~\ref{fig:tel_matching_performance}, the algorithm yielded significant practical benefits. In RAM implementations, FCFC consistently reduced query execution times by 80-98\% by bypassing repeated time comparisons, instead acting upon computationally inexpensive precomputed distance arrays. In MongoDB scenarios, both methods encountered generally higher latencies due to fundamental database I/O overhead. With very low-frequency occurrences, this underlying I/O cost heavily dominated execution time and narrowed the relative gap between the implementations. However, for heavily-represented queries yielding substantial candidate sets, which is usually the cause of temporal processing bottlenecks, FCFC successfully bridged this gap to achieve around a 50\% overall query time reduction.

\subsection{QEL query application on real-world data}

To illustrate the clinical utility of the QEL/BEST framework, we first consider cohort selection for established studies of central and obstructive sleep apnea, where differentiating central from obstructive events is essential for characterizing distinct pathophysiological phenotypes~\cite{donovan2016prevalence}. Using QEL, analogous cohort inclusion criteria can be expressed directly at the event-logical level.

For central apnea, a basic QEL cohort selection query is
\[
  \left( \mdiamond_{m} \text{CA} \right)_0 \, .
\]
\textit{\textbf{Translation:} ``For this subject, does the overnight PSG contain at least one central apnea (CA) event anywhere in the recording window \( [0,m) \)''}
Evaluating this formula on the CCSHS cohort returns 285 subjects (55.3\% of the sample) in 0.06 seconds. A structurally identical single-event query for obstructive apnea is
\[
  \left( \mdiamond_{m} \text{OA} \right)_0 \, .
\]

In pediatric populations, the onset of desaturation following apnea can be extremely rapid with a median delay of 12 seconds, carrying significant clinical implications~\cite{patel1994age}. Related work also demonstrated tight coupling between apnea, desaturation, and bradycardia in preterm infants~\cite{poets1993bradycardia}. To explore such temporal relationships in CCSHS, we formulate a preliminary cohort-selection-style query capturing subjects in whom a central apnea precedes an SpO\textsubscript{2} desaturation using the \textbf{Before variant 2} template:
\[
  \exists q \exists x \exists y 
  \Bigl(
    \boldBox_x \neg\text{CA}
    \;\land\;
    \bigl(\boldBox_y (\text{CA} \land \neg\text{SPO2\_DESAT})\bigr)_{x}
    \;\land\;
    (\mdiamond_{600} \text{SPO2\_DESAT})_{x+y}
  \Bigr)_q \, .
\]
\textit{\textbf{Translation:} ``Does this subject exhibit at least one episode in which a central apnea (CA) begins after a CA-free baseline, persists for \(y\) seconds without overlapping SpO\textsubscript{2} desaturation, and is followed within 600 seconds by an SpO\textsubscript{2} desaturation event-mirroring analyses that select infants or children whose desaturations are temporally attributable to preceding apneas?''}
This query searches for sequences in which a central apnea clearly precedes an SpO\textsubscript{2} desaturation, separating desaturation dynamics caused by apnea from those driven by other pathophysiology (e.g., intrinsic lung disease). When evaluated across all 515 CCSHS subjects, 254 subjects satisfy this desaturation-to-apnea pattern (Row~4 of Table~\ref{tab:ccshs_performance}). By adjusting the temporal bounds (e.g., varying the maximum allowable delay), the same QEL template can be used to compare desaturation-latency distributions across subgroups, analogous to prior work quantifying apnea-desaturation delays as physiological markers.

We assessed query speed for a representative set of QEL formulas, comprising one single-event query, five dual-event queries, and three event-extraction queries, on the CCSHS dataset described in Section~\ref{sec:exp_dataset}. The evaluated queries, together with their template names (see Section~\ref{sec:template}) and constituent events, are:
\begin{itemize}
  \item (1) Anywhere-on-timeline existence: central apnea.
  \item (2) Co-occurrence: central apnea and sleep stage N3.
  \item (3-6) Before variants 1-4: central apnea and SpO\textsubscript{2} desaturation.
  \item (7) Clock-anchored existence extraction: events between 3:00 and 4:00 AM.
  \item (8) Event-anchored existence extraction: windows centered on central apnea.
  \item (9) Event-anchored universality extraction: regions in which central apnea persists.
\end{itemize}

The query performance results are summarized in Table~\ref{tab:ccshs_performance}. The \texttt{x1}, \texttt{x10}, \texttt{x100}, and \texttt{x1000} columns correspond to the original CCSHS dataset and synthetically scaled datasets containing 10, 100, and 1{,}000 times as many subjects, respectively. Execution times are reported as averages over 10 runs per condition. Across all scales, the QEL engine efficiently handles cohort-selection and event-extraction workloads on large sleep datasets: even at \texttt{x1000}, cohort-selection-style queries execute in under 45 seconds. For extraction queries that return large event sets, runtime growth is dominated by memory and data-transfer overhead rather than by the temporal-logic model-checking semantics themselves, consistent with experience from large-scale oximetry-based screening studies~\cite{alvarez2007ctm}.

\begin{table}[ht]
  \begin{center}
  \caption{Query performance on CCSHS dataset at different scales. Execution time (Exe time) is in seconds and reported as average over 10 runs. $*$The query only returns the subjects satisfying the query condition, so the number of events is not applicable.}
  \begin{tabular}{|l|r|r|r|r|r|r|} \hline
  {\bf} & {\bf Num of} & {\bf Num of} & {\bf Exe time} & {\bf Exe time} & {\bf Exe time} & {\bf Exe time} \\
  & {\bf subjects} & {\bf events} & {\bf (x1)} & {\bf (x10)} & {\bf (x100)} & {\bf (x1000)} \\ \hline
  1 & 285 & $N/A^*$& 0.06 & 0.09 & 0.46 & 4.18 \\ \hline
  2 & 285 & $N/A^*$& 0.13 & 0.91 & 4.25 & 42.19 \\ \hline
  3 & 224 & $N/A^*$& 0.10 & 0.43 & 4.19 & 44.95 \\ \hline
  4 & 254 & $N/A^*$& 0.11 & 0.53 & 4.22 & 42.06 \\ \hline
  5 & 235 & $N/A^*$& 0.10 & 0.51 & 4.09 & 40.91 \\ \hline
  6 & 254 & $N/A^*$& 0.15 & 0.51 & 4.11 & 43.93 \\ \hline
  7 & 515 & 21,667 & 0.45 & 1.97 & 7.91 & 921.45 \\ \hline
  8 & 285 & 2,700  & 0.09 & 0.23 & 1.70 & 76.16 \\ \hline
  9 & 285 & 1,747  & 0.07 & 0.20 & 1.60 & 72.78 \\ \hline
  \end{tabular}
  \label{tab:ccshs_performance}
\end{center}
\end{table}

To demonstrate the expressive power of QEL beyond standard cohort templates, we next present two illustrative patterns that align with complex clinical phenotyping tasks in the sleep-medicine literature.

\paragraph{Example 1: Pre-apnea medication washout and sleep-stage restriction.}

In interventional sleep trials, it is common to restrict analyses to epochs with/without confounding medications or interventions and to focus on specific sleep stages where respiratory events have clearer physiological interpretation. For instance, we can encode a medication washout pattern of a desaturation-to-apnea delay with medication and sleep-stage constraints:
\[
  \exists q
  \Bigl(
    \boldBox_{600} \neg(\text{MED} \lor \text{CA} \lor \text{OA})
    \;\land\; 
    \bigl(\boldBox_{5} (\text{CA} \lor \text{OA}) \land \text{N3} \land \neg\text{SPO2\_DESAT}\bigr)_{600} 
    \;\land\; 
    (\mdiamond_{20} \text{SPO2\_DESAT})_{605} 
  \Bigr)_q \, .
\]
\textit{\textbf{Translation:} ``Does this subject have at least one medication-free baseline interval without central or obstructive apnea for 600 seconds, followed by a central or obstructive apnea occurring in N3 sleep that lasts at least 5 seconds without overlapping SpO\textsubscript{2} desaturation, and then an SpO\textsubscript{2} desaturation beginning within 20 seconds?''}
This example shows how QEL can simultaneously encode washout windows, stage-specific constraints, and apnea-desaturation timing, providing a declarative analogue of complex inclusion criteria used in mechanistic and interventional studies of apnea.

\paragraph{Example 2: Signal-level physiological event definitions.}

In many large datasets, respiratory and hypoxemic events are available only as precomputed labels, which can suffer from inter-scorer variability and evolving scoring rules~\cite{aasm2012rules}. QEL instead allows event definitions to be specified directly from continuous physiological signals, in line with the American Academy of Sleep Medicine (AASM) scoring criteria and signal-level approaches to oximetry-based screening~\cite{aasm2012rules,alvarez2007ctm}. We illustrate this with two atomic constructs:

A signal-level central apnea can be captured in QEL as
\[
  \varphi_{\text{CA\_signal}} = \exists d \geq 10 \boldBox_{d} \bigl(\ell_{\text{AirflowCessation}} \land \ell_{\text{EffortAbsence}}\bigr) \, .
\]
\textit{\textbf{Definition:} ``An event is classified as a central apnea if: (i) the airflow signal shows a complete or near-complete cessation (nasal cannula pressure or thermistor amplitude close to zero, encoded by atomic label \( \ell_{\text{AirflowCessation}} \)) lasting at least 10 seconds; and (ii) respiratory effort signals show simultaneous absence of chest and abdominal movements (plethysmography bands flat, encoded by atomic label \( \ell_{\text{EffortAbsence}} \)) throughout the event, consistent with AASM criteria for central apnea~\cite{aasm2012rules}.''}

Let stable pre-event baseline window be 30 seconds. A signal-level desaturation event is defined as
\[
  \varphi_{\text{Desat\_signal}} = \exists d \geq 10 \; \bigl(\boldBox_{30} \ell_{\text{SpO2$>94$\%}}\bigr) \;\land\; \bigl(\boldBox_d \ell_{\text{SpO2Drop$\geq3$\%}}\bigr)_{30} \, .
\]
\textit{\textbf{Definition:} ``A hypoxemic event is defined as a transient drop in the SpO\textsubscript{2} signal of at least 3 to 4\% relative to a stable pre-event baseline \(> 94\%\) lasting at least 10 seconds.''}

We can express the desaturation-to-CA delay cohort selection query using signal-level definitions as:
\begin{align}
  &\exists q \; \exists d \le 12 \; \exists x \geq 10 \; \exists y \geq 10 \notag \\
  &\quad \bigl(\boldBox_{30} \bigl( \ell_{\text{SpO2${>}94$\%}} 
      \land \neg(\ell_{\text{AirflowCessation}} \land \ell_{\text{EffortAbsence}})\bigr) \notag \\
  &\quad {}\land\; \bigl( \boldBox_{x} (\ell_{\text{AirflowCessation}} 
      \land \ell_{\text{EffortAbsence}}) \land \boldBox_{d} \neg \ell_{\text{SpO2Drop$\geq3$\%}}\bigr)_{30} \notag \\
  &\quad {}\land\; \bigl(\boldBox_{y} \ell_{\text{SpO2Drop$\geq3$\%}}\bigr)_{30+d} \bigr)_{q} \, .
\end{align}

Because all components are defined directly on raw signals, QEL can reproduce and extend signal-processing pipelines used in apnea-hypopnea research while retaining a single declarative language for both event definition and cohort selection.

\section{Discussion}
\label{sec:discussion}

\paragraph{Symbolic biomedical informatics contribution.}
This logic-based framework recasts sleep temporal querying as a data-management and secondary-use problem rather than a pure algorithmic range-search problem. QEL formulas provide explicit, reusable phenotype definitions; BEST provides a common interval-ensemble representation for NSRR-style annotations; and model checking provides a precise semantics for cohort inclusion. Unlike ad hoc scripts, a phenotype or cohort expressed as a QEL formula is fully reproducible, straightforward for investigators to read, and readily processed by machines.

\paragraph{Model-checking rigor.}
The finite-evidence theorem in Section~\ref{sec:indexing} guarantees that QEL queries over dense rational time can be evaluated by finitely many representative checks once a BEST and a formula are fixed. Thus the endpoint-array representation is a sound reduction from the formal semantics. This is especially important for real-world clinical research, where small differences in event anchoring, delay windows, and inclusion/exclusion criteria can materially alter cohort membership.

\paragraph{2DFC build time and space efficiency.}
2DFC is significant because it exploits a biomedical-data property that generic range structures ignore: per-label sleep annotation intervals are already sorted and non-overlapping after BEST normalization. At 90M intervals, 2DFC takes 3,655 s compared with 11,549 s for RTFC and 23,902 s for 2DRT, yielding approximately $3\times$ and $6\times$ reductions (Table~\ref{tab:result_build}). Its $O(n)$ build time and $O(n)$ space are important for practical repository services, where indexes must be rebuilt or refreshed as datasets are curated and shared.

\paragraph{2DFC query performance and tradeoffs.}
For typical sleep datasets ($m\approx23$, $n>200{,}000$), the $O(m)$ term in 2DFC's $O(\log|FC[0]|+m+k)$ complexity is negligible. At larger $m$, RTFC ($O(\log n+k)$) can overtake 2DFC; in our synthetic experiments, the crossover occurs near $m=210$ at 2M total intervals (Figure~\ref{fig:result_event_number1}). In practice, event lists can be ordered by ascending cardinality and compatible non-overlapping labels can be merged to reduce the additive event-type term.

\paragraph{FCFC performance for dual-event pattern matching.}
FCFC targets a different part of the model-checking workload: dual-event relationships such as co-occurrence, before/after variants, and bounded delay. By replacing repeated flatten-sort-scan operations with precomputed distance arrays, FCFC achieves 80-98\% query-time reduction versus the NFA-style baseline in RAM mode and approximately 50\% reduction in MongoDB mode for high-candidate queries (Figure~\ref{fig:tel_matching_performance}). Choosing the rarest event as the anchor further improves throughput.

\paragraph{Real-world scalability.}
On CCSHS (515 subjects, 202,587 intervals, 23 event labels), all evaluated queries run at sub-second latency at original scale. Cohort-selection queries (Table~\ref{tab:ccshs_performance}, Q1-6) complete within 45 s at $1{,}000\times$ scale. Event-extraction query Q7 reaches 921 s at $1{,}000\times$, reflecting result-set size and hardware constraints. These experiments show that formal model checking and scalable systems design can coexist in a deployable sleep-data platform.

\paragraph{Limitations and extensions.} 2DFC and FCFC assume normalized non-overlapping interval ensembles; overlapping annotations require preprocessing or label-specific normalization. FCFC's $O(ml)$ space can be large and may benefit from sparse, on-demand, or compressed construction. Fully general nested QEL formulas require recursive application of the finite-evidence theorem beyond the query templates evaluated here. 
The current framework is designed for retrospective, static datasets: adding new events or subjects requires rebuilding the 2DFC and FCFC index structures, and incremental or online index maintenance is an engineering extension for continuously growing repositories. 
The cohort-discovery layer currently relies on a fixed library of query templates; translation of arbitrary free-form QEL formulas into executable query statements, together with automated query optimization and plan selection, is an important direction for future work.
Future work should also integrate uncertainty in annotations, support cross-study harmonization of event vocabularies, and expose QEL formula libraries through data-sharing platforms so that temporal phenotypes can be reused across cohorts.

\section{Conclusion}
\label{sec:conclusion}

We presented a logic-based temporal cohort discovery engine for sleep-study data that integrates formal representation, model checking, index design, and empirical validation. QEL provides a dense-time logical foundation for expressing temporal phenotypes over PSG annotations and signal-derived intervals. BEST provides a compact subject-level representation in which every event label is interpreted as a finite interval ensemble. Together, QEL and BEST make cohort inclusion criteria explicit and reproducible by bridging human readability and machine computability: a subject satisfies a cohort definition precisely when the corresponding BEST model satisfies the QEL formula.

The indexing contributions make this formal semantics practical for large repositories. 2DFC exploits the sorted, non-overlapping structure of normalized sleep annotation intervals to support efficient interval-overlap retrieval with $O(n)$ build time and $O(n)$ space. FCFC targets the complementary workload of dual-event temporal pattern matching by linking boundaries to a global index and enabling $O(1)$ target-event lookup for each anchor-target comparison after preprocessing. Experiments on synthetic datasets up to 90 million intervals and on CCSHS annotations from NSRR demonstrate that this combination of formal model checking and specialized indexing can support both cohort-selection and event-extraction workloads at repository scale.

This work positions temporal cohort discovery as a first-class biomedical informatics capability rather than as a collection of ad hoc scripts. By organizing sleep research requirements into single-event retrieval, dual-event temporal pattern matching, and event data extraction, the framework provides reusable query templates that can encode clinically meaningful timing constraints, including AASM-inspired duration thresholds, co-occurrence, bounded delay, stage restriction, and absence windows. The same principles can be extended to other biomedical domains where high-resolution physiological time series and interval annotations are central to computational phenotyping.

\section*{Acknowledgment}\label{ack}

This work has been supported in part by the U.S. National Science Foundation Award IIS2500624 and the National Institutes of Health grants R01AG084236 and U24AG098157. The views expressed in this paper are those of the authors and do not necessarily reflect those of the funding agencies.

\section*{Declaration of Competing Interest}

The authors declare that they have no known competing financial interests or personal relationships that could have appeared to influence the work reported in this paper.

\bibliographystyle{cas-model2-names}

\bibliography{cas-refs}

\end{document}